
\documentclass[11pt,letterpaper]{article}

\title{\bf Lifting General Relativity \\ to Observer Space}
\author{\bf Steffen Gielen\\[.5em]
{\sl \small Perimeter Institute for Theoretical Physics} \\[-.3em]
{\sl \small 31 Caroline St. N.} \\[-.3em]
{\sl \small  Waterloo ON, N2L 2Y5, Canada} \\
\small  \texttt{sgielen@perimeterinstitute.ca} 
 \and
{\bf Derek K.\ \!Wise} \\[.5em]
{\sl \small Institute for Quantum Gravity} \\[-.3em]
{\sl \small Universit\"at Erlangen--N\"urnberg} \\[-.3em]
{\sl \small Staudtstr.\ \!7/B2,\! 91058 Erlangen,\! Germany} \\
\small \texttt{derek.wise@gravity.fau.de} 
}

\usepackage{amssymb,amsmath,amsthm}

	\makeatletter 
	\def\th@plain{%
	  \thm@notefont{}
	  \itshape 
	}
	\def\th@definition{%
	  \thm@notefont{}
	  \normalfont 
	}
\makeatother

\usepackage[
    colorlinks,%
    linkcolor=blue,citecolor=red,urlcolor=blue,
]{hyperref}
\usepackage[all,dvips,knot,color]{xy}
\usepackage{tikz}
\usepackage{graphicx}
\usepackage{wrapfig}
\usepackage{calc}

\setlength{\topmargin}{0pt}
\setlength{\textheight}{8.5in}
\setlength{\textwidth}{6.5in}
\setlength{\oddsidemargin}{0pt}
\setlength{\evensidemargin}{0pt}
\setlength{\baselineskip}{1ex}

\def\barr{\begin{array}}
\def\earr{\end{array}}
\def\ben{\begin{equation}}
\def\een{\end{equation}}
\def\bena{\begin{eqnarray}}
\def\eena{\end{eqnarray}}
 
\newcommand{\sect}[1]{\setcounter{equation}{0}\section{#1}}





\newcommand{\om}{\omega}
\newcommand{\Om}{\Omega}


\newcommand{\R}{{\mathbb R}}

\newcommand{\Z}{{\mathbb Z}}


\newcommand{\M}{{M}} 
\renewcommand{\S}{\mathcal{S}} 

\newcommand{\fb}{FM} 

\newcommand{\fake}{\mathcal{T}} 
\newcommand{\ff}{{P}} 
\newcommand{\fo}{\mathcal{O}} 

\newcommand{\GH}{Z} 

\newcommand{\gh}{z} 
\newcommand{\hk}{y} 

\newcommand{\ggh}{\mathfrak{z}} 
\newcommand{\ghk}{\mathfrak{y}} 

 
\newcommand{\ughk}{\underline{\ghk}} 

\newcommand{\uh}{\underline{\h}}
\newcommand{\uk}{\underline{\k}}

\newcommand{\Hyp}{\mathrm{H}} 


\newcommand{\uA}{\underline{A}}


\newcommand{\maps}{\colon}

\def\stackto #1 { \, {\stackrel{#1}{\longrightarrow}}\, }
\def\stackTo #1 { {\stackrel{#1}{\Longrightarrow}} }

\newcommand{\tr}{{\rm tr}}


\newcommand{\Ad}{{\rm Ad}}


\newcommand{\Ktild}{{\widetilde K}}
\newcommand{\Htild}{{\widetilde H}}
\newcommand{\Gtild}{{\widetilde G}}

\newcommand{\SO}{{\rm SO}}

\newcommand{\ISO}{{\rm ISO}}
\newcommand{\Iso}{\mathfrak{iso}}

\newcommand{\SIM}{{\rm SIM}}

\newcommand{\g}{\mathfrak{g}}
\newcommand{\h}{\mathfrak{h}}

\renewcommand{\k}{\mathfrak{k}}

\renewcommand{\j}{\mathfrak{j}}

\newcommand{\Spin}{{\rm Spin}}
\newcommand{\ISpin}{{\rm ISpin}}

\newcommand{\Diff}{{\rm Diff}}
\newcommand{\Vect}{{\rm Vect}}

\newcommand{\define}[1]{{\bf #1}}


\newcommand{\we}{\wedge}
\renewcommand{\L}{\pounds} 

\newcommand{\tensor}{\otimes}

\newcommand{\half}{\frac{1}{2}}

\newtheorem{thm}{Theorem}
\newtheorem{lemma}[thm]{Lemma}
\newtheorem{prop}[thm]{Proposition}
\newtheorem{defn}[thm]{Definition}

\newtheorem{exple}[thm]{Example}

\renewenvironment{proof}{\noindent
\textbf{Proof: }}{\hfill\rule{.6em}{.8em} \medskip}
\newenvironment{proof.within.proof}
{\noindent{\it Proof:}}{
\hfill $\Box$ \medskip}


\newcounter{Ccounter} 
\newenvironment{C-list}{  
\begin{list}{{\rm C\arabic{Ccounter}}.}{\usecounter{Ccounter}}
}{\end{list}}

\newcounter{Cpcounter} 
\newenvironment{C'-list}{  
\begin{list}{{\rm C\arabic{Cpcounter}${}'$}.}{\usecounter{Cpcounter}}
}{\end{list}}


\newcommand{\hepth}[1]{\href{http://arxiv.org/abs/hep-th/#1}{arXiv:hep-th/#1}}

\newcommand{\grqc}[1]{\href{http://arxiv.org/abs/gr-qc/#1}{arXiv:gr-qc/#1}}

\newcommand{\arxiv}[1]{\href{http://arxiv.org/abs/#1/}{arXiv:#1}}

\newcommand{\webpage}[1]{{\color{blue}}\url{#1}{\color{blue}}}

\begin{document}

\date{May 4, 2013}

\maketitle

\thispagestyle{empty}

\begin{abstract}

The `observer space' of a Lorentzian spacetime is the space of future-timelike unit tangent vectors.  Using Cartan geometry, we first study the structure a given spacetime induces on its observer space, then use this to define abstract observer space geometries for which no underlying spacetime is assumed.   We propose taking observer space as fundamental in general relativity, and prove integrability conditions under which spacetime can be reconstructed as a quotient of observer space. 
Additional field equations on observer space then descend to Einstein's equations on the reconstructed spacetime.  We also consider the case where no such reconstruction is possible, and spacetime becomes an observer-dependent, relative concept. Finally, we discuss applications of observer space, including a
geometric link between covariant and canonical approaches to gravity.

\end{abstract}

\section{Introduction}
\label{intro}

General relativity is about understanding that physics does not take place against the backdrop of a fixed geometry.  Rather, geometry itself is a dynamical entity, bending and curving in response to matter, just as matter is subject to geometric rules of the space it inhabits.  
There are, however, different possible interpretations of such statements.  In particular, do we mean the geometry of {\em spacetime}, the geometry of {\em space}, or something else?  

This question is the root of tension between `covariant' and `canonical' approaches to gravity. The `covariant' approach focuses on the geometry of {\em spacetime}, given `all at once'.  This is elegant, but unfortunately rather far removed from our actual experience of the world, in which space and time appear quite distinct.  The so-called `canonical' picture focuses instead on the geometry of {\em space} and how this geometry evolves in time, and is thus more clearly related to our spatiotemporal intuition.   On the other hand, the notion of `time' is fixed arbitrarily from the outset, going against the spirit of relativity, even when the final result is independent of this choice.   Worse yet, showing this independence in some formulations is decidedly nontrivial.  The term `canonical gravity', stemming from the `canonically conjugate' variables in Hamiltonian mechanics, thus stands in ironic contrast with standard mathematical use of the word `canonical', where it means involving no arbitrary choices.  In brief, canonical gravity is not canonical.  

In this paper, we reformulate general relativity in a way that maintains the best of both approaches.  To do this, we pass from spacetime to {\em observer space}---a 7-dimensional manifold of all possible observers.  On one hand, this perspective offers a clear-cut distinction between spatial and temporal directions. On the other, it acknowledges the local, observer-dependent nature of time and space in general relativity.   Because we consider the space of all possible observers, there are no arbitrary choices to be made.  

But how do we describe the `geometry of observer space', and how is it related to the geometry of spacetime or of space?  To make this precise, it is helpful to use the approach to geometry rooted in the works of Felix Klein and \'Elie Cartan.  

Klein's Erlangen Program was about understanding geometry in terms of {\em symmetry}.  In its original form, it applied to the `Platonic ideals' of geometry---homogeneous spaces, such as Euclidean, hyperbolic, or projective geometry, in which any two points look essentially the same.  Cartan generalized Klein's ideas to the setting of {\em differential} geometry, giving a precise characterization of spaces with only `infinitesimal' symmetry.  His approach involves `infinitesimally modeling' a general manifold on one of Klein's homogeneous geometries. 

This outlook meshes nicely with relativistic physics: because geometry is {\em locally}, dynamically determined, our Platonic ideals of classical geometry carry over to the real world only as  infinitesimal approximations.  In passing from special relativity to general relativity, for example, Minkowski spacetime $\R^{3,1}$ survives only as the tangent space to a more general Lorentzian manifold $\M$.   The global Lorentz symmetry of $\R^{3,1}$ itself becomes a local gauge symmetry of orthonormal frames, which essentially implements Einstein's equivalence principle.  A connection describes how the frames at different points are related, in a way that generally depends on the path between points.  So, in some sense, general relativity describes spacetimes that are {\em infinitesimally modeled on Minkowski space}.  

To make Cartan's idea of infinitesimal modeling more precise, recall that a \define{homogeneous space}, or \define{Klein geometry}, is a manifold $\GH$ equipped with a smooth transitive action of a Lie group $G$.  If $H\subseteq G$ is the subgroup fixing some point $\gh\in\GH$, then $\GH$ may be identified with the coset space $G/H$.  On the other hand, a \define{Cartan geometry} on $M$, modeled on $\GH$, locally amounts to a $\g$-valued 1-form which, when composed with the projection $\g\to \g/\h$, gives an identification of the tangent space $T_xM$ with $\g/\h \cong T_\gh \GH$.  This 1-form transforms under gauge transformations as a connection, but the geometry is only invariant under the subgroup $H$.  There is thus an intrinsic `symmetry breaking' aspect to any Cartan geometry. 

Hints of Cartan geometry are to be found in gauge theoretic formulations of general relativity.  Kibble \cite{kibble} may have been first to notice that the usual Lorentz connection and coframe field, or vierbein, can be viewed as pieces of a single connection for the Poincar\'e group $\ISO(3,1)$.  Later, MacDowell and Mansouri \cite{macdo} discovered a gravity action that uses the same idea with a connection for the de Sitter group $\SO(4,1)$ or anti-de Sitter group $\SO(3,2)$.  In each of these formulations, the naive `gauge group' $G$ of the theory is broken to the Lorentz group $H=\SO(3,1)$---a telltale signal that Cartan geometry secretly underlies these actions.  The relevant model Klein geometry $G/H$ has:
\[
  G = \left\{ 
  \begin{array}{rl}
  \SO(4,1) & \text{\sf de Sitter group} \\
  \ISO(3,1) & \text{\sf Poincar\'e group} \\
  \SO(3,2) & \text{\sf anti-de Sitter group}
  \end{array}
  \right. \qquad \quad
  H = \SO(3,1) \quad \text{\sf Lorentz group.} 
\]
MacDowell--Mansouri gravity and Poincar\'e gauge theory thus provide Cartan-geometric descriptions of the `covariant' picture of general relativity. \cite{derekmacd,dereksigma}

Similarly, on the `canonical' side,  Ashtekar variables---especially in their `real' form originally given by Barbero \cite{barbero}---also hint at Cartan geometry.   The key variables are an $\SO(3)$ connection on {\em space} together with a spatial coframe field.  This makes it tempting to think of these fields as coming from a Cartan connection with model geometry $H'/K$ where
\[
  H' \cong \left\{ 
  \begin{array}{rl}
  \SO(4) & \text{\sf spherical group} \\
  \ISO(3) & \text{\sf Euclidean group} \\
  \SO(3,1) & \text{\sf hyperbolic group}
  \end{array}
  \right. \qquad \quad
  K = \SO(3) \quad \text{\sf rotation group,} 
\]
depending on whether space is modeled on the 3-sphere, Euclidean space or hyperbolic space.  Presumably, the choice of $H'$ should be related to the choice of $G$, since the model spacetimes have different associated spatial geometries. 

However, while tempting, the precise relationship between real Ashtekar variables and Cartan geometry modeled on $H'/K$ is not immediately apparent.   In an effort to sort out this relationship, we recently obtained a version of Ashtekar variables using `spontaneous breaking' of Lorentz symmetry \cite{lorentz}.   The idea is to introduce a field of `local observers'---each with their own preferred local notions of space and time---to extract from the spacetime Lorentz connection and coframe field a `spatial' $\SO(3)$ connection and triad.  These pieces can be assembled into a `spatial Cartan connection', giving a system  of evolving spatial Cartan geometries, or `Cartan geometrodynamics'.  But the role of $H'$ is still unclear: breaking Lorentz symmetry has more to do with the coset space $H/K$ than with the spatial geometry $H'/K$.  

This becomes clear when we consider {\em observer space}.   The key geometric idea is to combine the two levels of `symmetry breaking' we have just described:  picking not just a point but a particular {\em observer} in homogeneous spacetime breaks symmetry not just to $H$ or $H'$ but all the way to $K$ in a single step.  Hence $G/K$ is the observer space of the model spacetime $G/H$.   $H$ and $H'$ still play geometric roles in observer space geometry, as do each of the possible coset spaces:
\newsavebox{\events}
\savebox{\events}{\xy (0,0)*{G/H};
    (0,-4)*{\text{\sf\footnotesize spacetime}};
    \endxy}
\newsavebox{\velocity}
\savebox{\velocity}{\xy (0,0)*{H/K};
    (0,-4)*{\text{\sf\footnotesize velocity space}};
    \endxy}
\newsavebox{\spc}
\savebox{\spc}{\xy (0,0)*{H'/K};
    (0,-4)*{\text{\sf\footnotesize space}};
    \endxy}
\newsavebox{\frames}
\savebox{\frames}{\xy (0,0)*{G/H'};
    (0,-4)*{\txt{\sf\footnotesize space of}};
    (0,-7)*{\txt{\sf\footnotesize `spaces'}};
    \endxy}
\newsavebox{\observers}
\savebox{\observers}{\xy (0,0)*{G/K};
    (0,-4)*{\text{\sf\footnotesize observer space}};
    \endxy}
\[
\xygraph{
  []!{0;<3cm,-2.5cm>:<3cm, 2.5cm>::}
  []{G}="A" :@{-}^{\usebox{\frames}} [r] 
    {H'} :@{-}^{\usebox{\spc}} [d]
    {K}="B" :@{-}^{\usebox{\velocity}} [l]
    {H} :@{-}^{\usebox{\events}} "A"
    :@{-}|-{\usebox{\observers}} "B"
  }
\]
In particular, to get from $G$ to $K$ we can just as well go through $H'$, first choosing a homogeneous `spatial slice' and then a point in this submanifold breaking the symmetry group down to $K$.   All of this generalizes from homogeneous to Cartan geometry, where we can view general observer spaces as a deformation of the homogeneous models. 

One point of this paper is that spacetime Cartan geometry and Cartan geometrodynamics are just two aspects of the {\em Cartan geometry of observer space}.  In more physical language, the geometry of observer space links the covariant and canonical pictures of general relativity.  Our study of this idea began with understanding how Ashtekar variables arise from breaking Lorentz symmetry using local observers \cite{lorentz, essay}.  The present paper is, in part, a continuation of this story, completing and clarifying the geometric picture underlying our previous work.
  
However, we also have independent physical motives for studying observer space, especially for potential applications beyond general relativity. First, since the group stabilizing a given observer is $K$, we can use observer space to investigate Lorentz-violating physical theories which take $K$ as the fundamental symmetry group. Such a violation of Lorentz symmetry is a possibility that continues to be investigated experimentally, and is also inherent in several theoretical models. For instance, several proposals for gravitational theories beyond general relativity, such as Ho\v{r}ava-Lifshitz gravity \cite{horava}, causal dynamical triangulations \cite{cdt}, and shape dynamics \cite{shape}, involve a preferred foliation of spacetime. In our framework, a choice of  foliation corresponds to a field of local observers, locally breaking Lorentz symmetry; on observer space we consider all such observers at once. More practically, spacetime geometry can only be probed by `observers', and to build a phenomenological model to be confronted with observation, one might prefer to assume gauge invariance only under $K$. While many such proposals for Lorentz violation are motivated by attempts to quantize gravity, our framework is, at this stage, purely classical.

More interestingly, passing from spacetime to observer space as the arena for physics could allow us to discard spacetime as a fundamental concept. This is the main message of the `relative locality' proposal \cite{relative}: the notion of spacetime itself may be {\em observer-dependent}. The proposal of \cite{relative} is essentially a modification of special relativity based on such an observer-dependent spacetime, with an absolute momentum space common to all observers. In a more general setting one would expect both momentum space and spacetime to be observer-dependent concepts; this will be precisely the interpretation given to a general observer space geometry in this paper. Our construction hence provides a natural framework to move from `special' to `general' relative locality. 

\subsection*{Plan of the paper}
In section \ref{sec:spacetime}, we explain how Cartan geometry is used in ordinary spacetime physics,  including a review of the Cartan-geometric underpinnings of MacDowell--Mansouri gravity. 

In section \ref{sec:observerspace}, we study the geometry of observer space, first with a Lorentzian spacetime given at the outset, and then from the perspective of Cartan geometry, where the Cartan connection induces the geometry of observer space. We explain our recent `Cartan geometrodynamics' picture of Ashtekar variables from the perspective of observer space.

Section \ref{sec:lifting-gr} contains some of our main results.  We consider the possibility that observer space is more fundamental than spacetime, and derive conditions (Thm.~\ref{thm:ex-spacet}) for the {\em existence of spacetime}, i.e.\ the ability to reconstruct an observer-independent spacetime from observer space. We give an action on observer space that allows the reconstruction of spacetime, and whose solutions include all solutions of vacuum general relativity.

In section \ref{sec:beyond} we discuss the general scenario where no such reconstruction is possible and spacetime is {\em relative}. We explain observer-dependent notions of {\em coincidence} of observers and of spacetime. As a special case we consider the possibility that velocity space, rather than spacetime, is absolute. We explain how the proposal of relative locality \cite{relative} is naturally described in these terms.

We conclude with some remarks about the fundamental status of spacetime, and suggest some directions for further investigation. For the convenience of the reader, we summarize our notation in appendix \ref{notation}.

\section{Cartan geometry in spacetime physics}
\label{sec:spacetime}

We begin with a brief overview of Cartan geometry and of its use in the MacDowell-Mansouri description of gravity. This material is not new, and partly overlaps with our previous expositions of Cartan geometry in physics \cite{derekmacd,dereksigma}.   While Cartan geometry is a broad subject (see e.g.\ \cite{Sharpe}), we include here a brief introduction sufficient for our main goal: the construction of observer space geometries.  We find it best to do this before leaving the familiar world of spacetime physics. 

\subsection{Model spacetimes}
\label{Klein}

Klein geometry studies homogeneous spaces via their symmetry groups.  If a Lie group $G$ acts transitively on a manifold $\GH$, we can identify $\GH$ with the coset space $G/H$, where $H$ is the stabilizer of an arbitrarily chosen point $z\in Z$:  
\ben
   H = \{g\in G : g\gh = \gh\}\,,
\een
a topologically closed (and hence Lie) subgroup.  The isomorphism $\GH\cong G/H$ is $G$-equivariant.  Conversely, the coset space $G/H$ is a manifold with smooth $G$ action, provided $H$ is closed in $G$.  It is thus convenient to define a Klein geometry to be such a pair of groups, even though we think of these as algebraic tools for studying the geometry of the corresponding homogeneous space. 
\begin{defn}
A \define{Klein geometry} $(G,H)$ is a Lie group $G$ with closed subgroup $H$.  
\end{defn}

For spacetime geometry, the obvious Klein geometries are the standard family of maximally symmetric solutions of the vacuum Einstein equations: de Sitter, Minkowski, or anti-de Sitter, depending on the cosmological constant $\Lambda$. 
In $3+1$ dimensions, the corresponding groups of isometries preserving both orientation and time orientation are
\ben
\label{model-spacetimes}
  G = \left\{ 
  \begin{array}{rlc}
  \SO_o(4,1) & \text{\sf de Sitter} & (\Lambda > 0)\\
  \ISO_o(3,1) & \text{\sf Minkowski} & (\Lambda = 0) \\
  \SO_o(3,2) & \text{\sf anti-de Sitter} & (\Lambda < 0)
  \end{array}
  \right. \qquad \quad
  H = \SO_o(3,1) \quad \text{\sf Lorentz group}.
\een
where the subscript ${}_o$ denotes the connected component.  Throughout the rest of this paper, unless otherwise noted, the letters $G$ and $H$ will refer to the particular groups in (\ref{model-spacetimes}), and $\GH$ will denote the corresponding homogeneous spacetime, where $H$ is the stabilizer of an arbitrary $\gh\in \GH$, fixed once and for all.  Since little of what follows depends on the sign of $\Lambda$, we treat all cases in parallel, noting exceptions as necessary.

In any of these spacetime Klein geometries, $G$ acts irreducibly on its Lie algebra $\g$ via the adjoint representation, but restricting this to the group $H$, we have an invariant direct sum
\ben
      \g = \h \oplus \ggh
\label{split1}
\een
where $\h$ is the Lie algebra of $H$, and  the complement $\ggh \cong \R^{3,1}$ may be identified in a canonical way with the tangent space at $z$:
\ben\ggh = T_{\gh}\GH.\een
Geometrically, this direct sum breaks the `infinitesimal symmetries' of $\GH$ up into `infinitesimal Lorentz transformations', which preserve $\gh$, and `infinitesimal translations' of $\gh$.   In fact, this gives a $\Z/2$-grading of $\g$ with even part $\h$ and odd part $\ggh$:
\ben
     [\h,\h] \subseteq \h \quad [\h,\ggh] \subseteq \ggh \quad [\ggh,\ggh] \subseteq \h
\een
A reductive geometry in which $\h\oplus\ggh$ is a $\Z/2$-grading is called a \define{symmetric space} \cite{helgason}.

In fact, even the metric on these homogeneous spacetimes can essentially be recovered from Lie theory.  Any semi-Riemannian metric on $\GH$ invariant under $G$ is induced by some nondegenerate $H$-invariant symmetric bilinear form on $\ggh$.  In each of the spacetime geometries $\GH$, there is only one such invariant bilinear form up to scale, so the geometry of $\GH$ reduces to knowing the groups $G$ and $H$, plus a unit of length. 

\subsection{Cartan geometry of spacetime}
\label{stcartan}

To each Klein geometry, there is an associated type of Cartan geometry.  Here we give the general definition, before specializing to the spacetime Klein geometries just discussed.
\begin{defn}
\label{def:cartan}
A {\bf Cartan geometry} $({\pi\maps P\to \M},A)$ modeled on the Klein geometry $(G,H)$ is a principal right $H$ bundle $\pi\maps P \to \M$ equipped with a $\g$-valued 1-form $A$ on $P$
\ben
        A\maps TP \to \g
\een
called the \define{Cartan connection}, satisfying three properties:
\begin{C-list}
\item \label{cartan:nondeg} For each $p\in P$, $A_p\maps T_p P\to \g$ is a linear isomorphism;

\item \label{cartan:equivariant} $(R_h)^\ast A = \Ad(h^{-1})\circ A \quad \forall h \in H$;
\item \label{cartan:mc} $A$ restricts to the Maurer--Cartan form on vertical vectors. 
\end{C-list}
\end{defn}

To be more precise in property \ref{cartan:mc}, note that we can pull back forms on $H$ along any local trivialization $f\maps P|_U \to U\times H$.  On each fiber $P_x$, this gives an isomorphism of {\em right} $H$-spaces, but only the pullback of the {\em left}-invariant Maurer--Cartan form on $H$---defined by $A_{H}(v)=(L_{h^{-1}})_\ast v$ for $v\in T_hH$---is independent of which trivialization we use.  This is the Maurer--Cartan form on $P$ that property \ref{cartan:mc} refers to.  

The \define{curvature} of a Cartan connection is the $(P \times_H \g)$-valued 2-form
\ben
    F = dA + \half[A,A]
\een
and a Cartan geometry is called \define{flat} if $F=0$. 
\begin{exple} 
Any Klein geometry $(G,H)$ becomes a Cartan geometry in a canonical way.  The map $G\to G/H$ is principal right $H$-bundle, and the Maurer--Cartan form $A\maps TG \to \g$ is a Cartan connection.  This Cartan geometry is flat by the Maurer--Cartan equation.  Conversely, any {\em flat} Cartan geometry is locally isomorphic to a Klein geometry.  (See e.g.\ {\rm \cite{Sharpe}}.)
\end{exple}

\subsubsection*{From Lorentzian geometry to Cartan geometry}
\label{lorentz-to-cartan}

We now focus on spacetime Cartan geometry and, in particular, give the precise correspondence to Lorentzian geometry in the familiar sense.

Fix any one of the three spacetime Klein geometries $(G,H)$ defined in section \ref{Klein}.  Starting with any manifold $\M$ equipped with a Lorentzian metric, orientation and time orientation, we will canonically construct a Cartan geometry associated to it.    

As the principal bundle, we take the bundle of all ways to glue $\GH$ to $M$ by identifying $\ggh=T_\gh\GH$ with some tangent space of $M$ in a way that respects all of the relevant structure on these tangent spaces.  More precisely, we define:
\ben
\label{frame-bundle}
 \fb =\{  \text{proper linear isometries }  f\maps \ggh \to T_x\M,\; x\in \M  \}\,,
\een
where we call a map \define{proper} if it preserves orientation and time orientation.  $\fb$ is isomorphic to the usual oriented and time oriented orthonormal frame bundle, so we refer to it as the \define{frame bundle} and to its elements as \define{frames}.  Since $H$ acts on $\ggh$, it acts on frames via composition, $f \mapsto f\circ h$, making $\fb$ into a principal right $H$ bundle.

The Cartan connection is a certain $\g$-valued 1-form on the bundle $\fb$, but since $\fb$ is a principal $H$ bundle, $H$-invariance of the splitting (\ref{split1}) means a $\g$-valued 1-form is simply an $\h$-valued 1-form together with a $\ggh$-valued 1-form.  In fact, we have a canonical $\ggh$-valued 1-form on $\fb$: 
\ben
     e\maps T\fb \to \ggh
\een
called the \define{soldering form}.  Given the canonical maps 
\ben
   \xymatrix@C=-.5em{
   &  T\fb \ar[dl]_{\varpi} \ar[dr]^{\pi_*} \\
   \fb & & T\M
   }
\een
where $\pi_*\maps T \fb \to T\M$ is the differential of the projection $\pi\maps\fb\to \M$ and $\varpi$ maps $v\in T_f \fb$ to $f$, the soldering form is given by
\ben
\label{soldering}
     e(v) = \varpi(v)^{-1}(\pi_*(v)).
\een

Moreover, the metric has a canonical Levi-Civita connection: the unique torsion-free connection on $T\M$. This corresponds to a torsion-free connection $\omega$ on $\fb$.  Together, we have 1-forms $\omega\maps T\fb \to \h$ and $e\maps T\fb \to \ggh$, which assemble to give:
\ben
   A\maps T\fb \to \g
\een
unique up to gauge transformations of the principal $H$ bundle $\fb\to \M$.  One may check that $(\pi\maps \fb \to M, A)$ is a Cartan geometry modeled on $(G,H)$.

\subsubsection*{From Cartan geometry back to Lorentzian geometry}
\label{fake}

We have just seen how to get a Cartan geometry starting from a Lorentzian spacetime, for any of the models $(G,H)$ from section \ref{Klein}.  To describe general relativity as a gauge theory for a Cartan connection, however, we need just the opposite: we want to see how an $H$-bundle with Cartan connection gives us a Lorentzian geometry.  

In a general Cartan geometry $({\pi\maps \ff\to \M},A)$ modeled on the spacetime geometry $(G,H)$, we can still think of the bundle $\ff$ as a stand-in for the bundle (\ref{frame-bundle}) of oriented orthonormal frames.  We often call it a \define{fake frame bundle}.  From it, we can construct the associated vector bundle 
\[
    \fake = \ff \times_H\ggh
\]
which we call the corresponding \define{fake tangent bundle}.  

An important observation is that the fake tangent bundle inherits a {\em metric} from the Klein geometry $(G/H)$:  since $\eta$ is an $H$-invariant inner product on $\ggh$, it induces a metric on $\fake$.  It should cause little confusion if we also call this metric $\eta$.   Since $\fake$ is only the {\em fake} tangent bundle, we cannot use the metric $\eta$ to measure lengths and angles for tangent vectors to $M$.  However, we can transfer this metric to $TM$ if we have a \define{coframe field}---a vector bundle isomorphism:
\[ 
\xymatrix @!0
{ T\M
 \ar [ddr]
  \ar[rr]^{\displaystyle e} 
  & &  
  \fake
  \ar[ddl]
  \\  \\ &
   \M 
 }
\]
This is the global analog of the local coframe fields $e\maps T\M \to \R^{3,1}$ often used in gravitational theory, and induces a metric on spacetime via pullback:
\ben
g(v,w)=\eta( e(v),e(w))\,.
\label{defmet}
\een
   Most importantly for our purposes, a coframe field can be obtained as part of a $\g$-valued 1-form on a principal $H$ bundle over $\M$:

\begin{lemma}
\label{coframe-lemma}
Let $p\maps \ff\to \M$ be a principal $H$-bundle, $\fake$ the corresponding fake tangent bundle.  Then there is a canonical one-to-one
correspondence between:
\begin{itemize}
\item vector bundle morphisms $e\maps T\M \to \fake$, and
\item $\ggh$-valued 1-forms $\varepsilon$ on $\ff$ that are:
 \begin{itemize}
 \item horizontal: $\varepsilon$ vanishes on $\ker(dp)$
 \item $H$-equivariant: $R^*_h \varepsilon = h^{-1} \circ
 \varepsilon$ for all $h\in H$.
 \end{itemize}
\end{itemize}
Moreover, the first of these is an isomorphism precisely when the
second is nondegenerate, meaning that each restriction $\varepsilon
\maps T_f\ff \to \ggh$ has maximal rank.

\end{lemma}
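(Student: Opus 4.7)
The key observation is that each point $p \in \ff$ gives a canonical linear isomorphism $\phi_p \maps \ggh \to \fake_{\pi(p)}$ defined by $\phi_p(\xi) = [p,\xi]$, and these satisfy the equivariance identity $\phi_{ph}(\xi) = \phi_p(h\xi)$ for all $h \in H$, which is just the defining relation of the associated bundle. I would construct the two directions of the correspondence explicitly using these $\phi_p$ and then verify they are mutually inverse.

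Going from a horizontal, equivariant 1-form $\varepsilon$ to a bundle map $e$: for $v \in T_x\M$, pick any $p \in \ff_x$ and any lift $\tilde{v} \in T_p\ff$ with $(dp)(\tilde{v}) = v$, and set $e(v) := [p,\,\varepsilon(\tilde{v})]$. Well-definedness has two parts. First, two lifts at the same $p$ differ by a vertical vector, on which $\varepsilon$ vanishes by horizontality, so the value is independent of the lift. Second, if I replace $p$ by $ph$ and take the lift $R_{h*}\tilde{v}$, equivariance gives $[ph,\,\varepsilon(R_{h*}\tilde{v})] = [ph,\, h^{-1}\varepsilon(\tilde{v})] = [p,\,\varepsilon(\tilde{v})]$ by the associated-bundle relation.

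In the other direction: given $e$, define $\varepsilon_p(\tilde{v}) := \phi_p^{-1}\bigl(e((dp)(\tilde{v}))\bigr)$. Horizontality is immediate because $\tilde{v} \in \ker(dp)$ forces $e((dp)(\tilde{v})) = 0$. Equivariance follows from $\phi_{ph}^{-1} = h^{-1}\circ \phi_p^{-1}$, together with $p\circ R_h = p$, yielding $(R_h^*\varepsilon)_p(\tilde{v}) = \phi_{ph}^{-1}(e((dp)(\tilde{v}))) = h^{-1}\varepsilon_p(\tilde{v})$. Unpacking the definitions shows these two constructions are inverse to one another.

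For the last claim, note that at each $p$ the map $\varepsilon_p$ factors as $\phi_p^{-1}\circ e_{\pi(p)}\circ (dp)_p$, where $\phi_p$ is a linear isomorphism and $(dp)_p$ is surjective onto $T_{\pi(p)}\M$. Hence the rank of $\varepsilon_p$ equals the rank of $e_{\pi(p)}$, and since $\dim\ggh = \dim T_x\M = \dim \fake_x$, maximal rank of $\varepsilon_p$ at every $p$ is equivalent to $e_{\pi(p)}$ being a linear isomorphism at every $x$, hence to $e$ being a vector bundle isomorphism. The main obstacle is purely notational: one must keep straight the left action of $H$ on $\ggh$ used in forming $\fake$ versus the right action on $\ff$ used in equivariance, and track inverses carefully; once this bookkeeping is settled, each verification reduces to a short diagram chase.
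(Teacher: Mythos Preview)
Your argument is correct and is exactly the standard construction one expects here: use the canonical frame isomorphisms $\phi_p\maps\ggh\to\fake_{\pi(p)}$ to pass back and forth, and check well-definedness via horizontality and equivariance. The rank comparison at the end is also clean and correct.

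As for comparison with the paper: there is nothing to compare. The paper does not actually prove this lemma; it simply declares the proof ``straightforward'' and refers the reader to \cite{tele} for details. Your write-up supplies precisely those details, so in effect you have filled in what the authors omitted rather than taken a different route.
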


\begin{proof}
The proof is straightforward.  See \cite{tele} for details.  
\end{proof}

From this lemma, it is immediate that the $\ggh$ part of the Cartan connection is equivalent to a coframe field.  In particular, $\ff$ is isomorphic to the frame bundle (\ref{frame-bundle}) for the metric (\ref{defmet}).  
The $\h$ part of the Cartan connection is then an Ehresmann connection on $\ff$, which corresponds to a metric-compatible connection.

Note that `fake frame bundles' and `fake tangent bundles' are equivalent.  While we started with the principal bundle $\ff$ and built the vector bundle $\fake = \ff\times_H \ggh$, we could just as well start with an arbitrary `fake tangent bundle' $\fake$---a vector bundle $\fake\cong TM$ equipped with a metric $\eta$ and an orientation, and mimic the construction of $\fb$ to get the principal $H$ bundle
\ben
\label{fake-frame-bundle}
 \ff \cong\{  \text{proper linear isometries }  f\maps \ggh \to \fake_x,\; x\in \M  \}\,.
\een

\subsection{MacDowell--Mansouri gravity}
\label{sec:macdo}

The formulation of general relativity in which spacetime Cartan geometry plays the most conspicuous role is the action introduced by MacDowell and Mansouri \cite{macdo}.  MacDowell--Mansouri gravity works only with a nonzero cosmological constant, so here we take $(G,H)$ to be either the de Sitter or anti-de Sitter model.  Fixing a fake frame bundle over spacetime $\M$, the only field in the theory is a Cartan connection $A$.   The action is
\ben
S_{{\rm \scriptscriptstyle MM}}[A] = \int_\M \kappa_{\h} \left(F_\h\wedge F_\h\right)\,,
\label{macdow}
\een
where $F=F[A]$ is the curvature, $F_\h$ is its $\h$-valued part, and $\kappa_{\h}$ is a non-degenerate $H$-invariant inner product on $\h$. There is a two-parameter family of such products, of the form
\ben
\kappa_{\h}(X,Y) = \tr_{\h}\left(X(c_0 + c_1\star)Y\right),\quad X,Y\in\h,
\label{forminner}
\een
where $\star$ is a Hodge star operator on $\h\cong \Lambda^2\R^{3,1}$, and $\tr_{\h}$ is the Killing form on $\h$. 

To see how this gives general relativity, first note that the splitting $\g = \h \oplus \ggh$ reduces $A$ to an $H$ connection $\om$ together with a coframe field $e$.  The $\h$ part of the curvature is then
\[
    F_\h = R+ \half[e,e]
\]
where $R=R[\om]$ is the curvature of $\om$.  Substituting this into the action, we get:
\bena
S_{{\rm \scriptscriptstyle MM}}[A] 
&= &c_0\int_\M \tr_{\h} \left([e,e]\wedge R
+ \frac14 [e,e]\wedge [e,e]
+ R\wedge R\right)\nonumber
\\&&+ c_1\int_\M \tr_{\h} \left([e,e]\wedge {\star R} 
+ \frac14 [e,e]\wedge {\star [e,e]}
+ R\wedge \star R\right)\,.
\label{macdowsub}
\eena

In the second line, with appropriate normalization, the first two terms are just the Palatini action for general relativity; the final term is a topological `Gauss--Bonnet' term, which does not affect the classical theory.  In fact, the resulting field equations for general relativity can also be neatly summarized as 
\ben
   [e,\star F] = 0 
\label{gravityeq}
\een  
where $\star$ acts only on the $\h$ part of the full curvature $F$. 
In the first line, $\tr_{\h}(R\wedge R)$ is also topological (the `Hirzebruch signature'), while the second term vanishes under the trace (see \cite{dereksigma}). The first term is then the `Holst term' added to the Palatini action to introduce the `Immirzi parameter' $\gamma=c_1/c_0$, and also does not modify the vacuum equations of motion (\ref{gravityeq}).

It is worth noting that while $G$ symmetry is broken `by hand' to $H$ in this action, the symmetry can also be broken `spontaneously', as first observed by Stelle and West \cite{stellewest}.   Discussion of this idea and its geometric significance can also be found in \cite{dgr,broken}.  However, we will not need this here.   Much more detail on the Cartan geometric underpinnings of MacDowell-Mansouri gravity and related theories can be found in our previous work \cite{dgr,derekmacd,dereksigma}.

\subsection{Geodesics and development}
\label{sec:development}

One of the central statements of general relativity is that test particles move along geodesics for the Levi-Civita connection on spacetime.  In order to define the geodesics for a given Cartan geometry $(\pi:\ff\to M,A)$, the central notion is {\it development}, a mapping of paths in $M$ into the model space $\GH\cong G/H$.  We first define the development using paths in $\ff$, but then see immediately that it depends only on the projection of the path to $M$. 
\begin{defn}
Given a Cartan connection $A$ on $\ff$, the {\bf development} of a path $\mathcal{C}:[0,1]\to\ff$ is a path $\gh_{\mathcal{C}}:[0,1]\to G/H$ defined by $\gh_{\mathcal{C}}(t)=g_{\mathcal{C}}(t)H$, where $g_{\mathcal{C}}(t)$ solves the differential equation 
\ben
g_{\mathcal{C}}^{-1}\,\dot{g}_{\mathcal{C}}=A(\dot{\mathcal{C}})
\een
where $\dot{\phantom{x}}$ denotes differentiation with respect to $t$.
\end{defn}
\noindent Clearly $g_{\mathcal{C}}$ is unique up to choosing $g_{\mathcal{C}}(0)$, hence $\gh_{\mathcal{C}}$ is unique after fixing an origin $\gh_{\mathcal{C}}(0)\in G/H$, usually the identity coset.  Note that $g^{-1}\dot g$ is shorthand for $L_{g^{-1}\ast} \maps T_g G \to \g$ applied to $\dot g$. 

\begin{lemma}
\label{lem:geodes}
Let $\mathcal{C}$ and $\mathcal{C}'$ be two paths in $\ff$ with the same projection into the base manifold: $\pi\circ\mathcal{C}=\pi\circ\mathcal{C}'$.  Then the developments $\gh_{\mathcal{C}}$ and $\gh_{\mathcal{C}'}$ are equal.  
\end{lemma}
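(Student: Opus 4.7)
My plan is to exploit the fact that two paths in the principal $H$-bundle $\ff$ with the same projection differ only by a vertical motion, i.e.\ by the right action of a time-dependent element of $H$. Properties \ref{cartan:equivariant} and \ref{cartan:mc} of the Cartan connection control exactly this kind of difference, so the ODE defining the development should transform predictably.

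Concretely, I would proceed as follows. First, since $\pi\circ\mathcal{C}=\pi\circ\mathcal{C}'$ and $\pi\maps\ff\to\M$ is a principal $H$-bundle, there is a smooth curve $h\maps[0,1]\to H$ with $\mathcal{C}'(t)=\mathcal{C}(t)\cdot h(t)$. Differentiating this product, I decompose $\dot{\mathcal{C}}'$ into a horizontal-like piece $(R_{h(t)})_\ast \dot{\mathcal{C}}(t)$ and a vertical piece given by the fundamental vector field generated by $h(t)^{-1}\dot h(t)\in\h$ at the point $\mathcal{C}'(t)$. Applying $A$ and using \ref{cartan:equivariant} on the first piece and \ref{cartan:mc} on the second, I obtain the key identity
\ben
A\bigl(\dot{\mathcal{C}}'(t)\bigr)=\Ad\bigl(h(t)^{-1}\bigr)A\bigl(\dot{\mathcal{C}}(t)\bigr)+h(t)^{-1}\dot h(t).
\een

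Next, I would guess that the solution $g_{\mathcal{C}'}$ of the development ODE for $\mathcal{C}'$ is simply $g_{\mathcal{C}}(t)\,h(t)$, and verify it directly. Computing $(g_{\mathcal{C}}h)^{-1}\frac{d}{dt}(g_{\mathcal{C}}h)=\Ad(h^{-1})\bigl(g_{\mathcal{C}}^{-1}\dot g_{\mathcal{C}}\bigr)+h^{-1}\dot h$ and using $g_{\mathcal{C}}^{-1}\dot g_{\mathcal{C}}=A(\dot{\mathcal{C}})$, the right-hand side matches the identity above, so $g_{\mathcal{C}}h$ does indeed solve the ODE for $\mathcal{C}'$. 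After arranging the initial conditions so that both developments start at the same coset (say the identity coset, which only fixes $g_{\mathcal{C}}(0)$ and $g_{\mathcal{C}'}(0)$ up to left multiplication by $H$), uniqueness of solutions to the ODE gives $g_{\mathcal{C}'}(t)=g_{\mathcal{C}}(t)\,h(t)$ on the nose. Projecting to $G/H$ yields $\gh_{\mathcal{C}'}(t)=g_{\mathcal{C}}(t)h(t)H=g_{\mathcal{C}}(t)H=\gh_{\mathcal{C}}(t)$, completing the proof.

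The only mildly delicate step is the tangent-vector decomposition that underlies the key identity: one must be careful that the fundamental-vector-field contribution is correctly identified so that \ref{cartan:mc} applies. Once that is in hand, the rest is a direct consequence of the equivariance property \ref{cartan:equivariant} and uniqueness of solutions of linear ODEs in Lie groups. I expect no further obstacles; this is essentially the standard observation that the development is natural with respect to changes of section of the $H$-bundle, and is what makes the notion of developing a path in $M$ well defined.
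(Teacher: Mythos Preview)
Your proof is correct and follows essentially the same approach as the paper: write $\mathcal{C}'=\mathcal{C}\cdot h$, use properties \ref{cartan:equivariant} and \ref{cartan:mc} to obtain $A(\dot{\mathcal{C}}')=\Ad(h^{-1})A(\dot{\mathcal{C}})+h^{-1}\dot h$, verify that $g_{\mathcal{C}}h$ solves the development ODE for $\mathcal{C}'$, and project to $G/H$. Your treatment is in fact slightly more careful than the paper's about the tangent-vector decomposition and the role of initial conditions and uniqueness.
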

\begin{proof}
If $\pi\circ\mathcal{C}=\pi\circ\mathcal{C}'$, we can write $\mathcal{C}(t)=\mathcal{C}'(t) h(t)$ for some path $h$ in $H$.  If $g_{\mathcal{C}}$ is a solution to the equation $g_{\mathcal{C}}^{-1}\,\dot{g}_{\mathcal{C}}=A(\dot{\mathcal{C}})$,  then $g':=g_{\mathcal{C}} h$ solves
\ben
{g'}^{-1}\dot{g'} = h^{-1}A(\dot{\mathcal{C}})\dot{h} + h^{-1}\dot{h} 
\label{difeq}
\een
Now $\dot{\mathcal{C}'}=\dot{\mathcal{C}} h + \mathcal{C} \dot{h}$ and $A$ satisfies properties \ref{cartan:equivariant} and \ref{cartan:mc} in definition \ref{def:cartan}: $(R_h)^\ast A = \Ad(h^{-1})\circ A$ and $A$ restricts to the Maurer-Cartan form on vertical vectors. Hence the right-hand side of (\ref{difeq}) is $A(\dot{\mathcal{C}'})$, and the development of $\mathcal{C'}$ is $\gh_{\mathcal{C'}}=g'H=g_{\mathcal{C}}H=\gh_{\mathcal{C}}$.
\end{proof}

This lemma lets us define development of a path in $\M$ using an arbitrary lifting to $P$. Geometrically, the development $\gh_\mathcal{C}$ of $\mathcal{C}$ into $\GH$ represents the path traced out on the `model' space $\GH$ as it is rolled along $\pi\circ\mathcal{C}$ in the manifold $\M$. 

\begin{prop}
\label{chargeodes}
Let  $({\pi\maps P\to \M},A)$ be a Cartan geometry for one of the models $Z\cong G/H$ in {\rm (\ref{model-spacetimes})}, and let $A=\omega + e$ be the reductive splitting of the Cartan connection.  A path $\gamma$ in $\M$ is a geodesic for the connection on $TM$ induced by $\omega$ if and only if its development $\gh_\gamma$ is a geodesic in $\GH$. 
\end{prop}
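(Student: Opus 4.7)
The plan is to lift $\gamma$ horizontally with respect to the Ehresmann connection $\omega$, exploit the reductive splitting $A = \omega + e$ to trivialize the development equation, and then invoke the symmetric-space structure of the model $Z = G/H$ to identify geodesics with one-parameter subgroups.

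First I would pick a horizontal lift $\tilde\gamma \maps [0,1]\to P$ of $\gamma$, i.e.\ a lift with $\omega(\dot{\tilde\gamma}) \equiv 0$. By Lemma \ref{lem:geodes} the development $\gh_\gamma$ may be computed using any lift, and along $\tilde\gamma$ the Cartan connection reduces to its coframe part:
\ben
A(\dot{\tilde\gamma}(t)) \;=\; e(\dot{\tilde\gamma}(t)) \;=:\; \xi(t) \in \ggh,
\een
so the development equation becomes $g^{-1}\dot g = \xi(t)$ with $\gh_\gamma(t) = g(t)H$.

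Next I would translate the geodesic condition on $\M$ into a condition on $\xi$. Under the isomorphism $TM \cong \ff \times_H \ggh$ supplied by the coframe, the tangent vector $\dot\gamma(t)$ is represented by $[\tilde\gamma(t), \xi(t)]$. Parallel transport in an associated bundle along a horizontal lift is given by constancy of the fiber representative, so $\dot\gamma$ is parallel along $\gamma$ for the connection on $TM$ induced by $\omega$ if and only if $\xi(t)$ is constant in $\ggh$.

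The third step is to characterize geodesics in the model. For each of the three choices in (\ref{model-spacetimes}), $Z = G/H$ is a symmetric space with reductive decomposition $\g = \h \oplus \ggh$, and the (unique, $G$-invariant) Levi-Civita connection of the model Lorentzian metric has as its geodesics through $g_0 H$ exactly the curves $t\mapsto g_0\exp(tX)H$ with $X\in\ggh$. Granting this, the conclusion follows both ways: if $\xi(t) \equiv \xi_0 \in \ggh$, then $g(t) = g(0)\exp(t\xi_0)$ solves the development equation and projects to such a geodesic; conversely, if $\gh_\gamma$ is a geodesic in $Z$, uniqueness of geodesics with prescribed initial velocity forces $g(t) = g(0)\exp(t\xi_0)$ for some $\xi_0\in\ggh$, and then $\xi(t) = g^{-1}\dot g = \xi_0$ is constant, so $\gamma$ is a geodesic on $M$.

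The one non-elementary ingredient, and therefore the main obstacle, is step three: the identification of geodesics on the symmetric spaces de Sitter, Minkowski, and anti-de Sitter with projected one-parameter subgroups generated by elements of $\ggh$. For these maximally symmetric models this is standard (see e.g.\ \cite{helgason}), but it is the input that ties the group-theoretic development to the metric notion of geodesic; the rest of the argument is bookkeeping made clean by the choice of a horizontal lift.
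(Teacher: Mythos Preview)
Your proposal is correct and follows essentially the same route as the paper: choose an $\omega$-horizontal lift so that the development equation reduces to $g^{-1}\dot g = e(\dot{\tilde\gamma})\in\ggh$, identify the geodesic condition on $\M$ with constancy of this $\ggh$-valued function, and then invoke the symmetric-space description of geodesics in $G/H$ (the paper cites the same result, Helgason Thm.~IV.3.3). The only cosmetic difference is that the paper works in the concrete frame bundle $FM$ with the soldering form, whereas you phrase the parallel-transport step via the associated bundle $P\times_H\ggh$; your converse step (``uniqueness \ldots forces $g(t)=g(0)\exp(t\xi_0)$'') silently uses that $g^{-1}\dot g\in\ggh$ to rule out a residual $H$-factor in the lift, but the paper's ``simply reverse this argument'' is equally terse on this point.
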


\begin{proof}
We consider the case where the fake frame bundle $\ff$ is the actual frame bundle $FM$, and the Cartan geometry is the canonical one constructed in section \ref{lorentz-to-cartan}.  The general case is no harder, but requires translating between `fake' and `real' bundles using the coframe.  

Since $\omega$ is an Ehresmann connection on $FM$, $\gamma$ has a unique horizontal lift $\tilde\gamma$ to $FM$.  At each value of $t$ we have a frame $\tilde \gamma(t) \maps \ggh \to T_{\gamma(t)}M$, and the condition for parallel transport along $\gamma$ is that 
\[
   \tilde \gamma(t)^{-1} (\dot\gamma(t)) =: z \in \ggh
\]
is independent of $t$.   On the other hand, since $\tilde \gamma$ is horizontal, i.e. $\omega(\dot{\tilde\gamma}(t)) = 0$, the differential equation for calculating the development reduces to 
\[
      g(t)^{-1}\dot g(t) = e(\dot{\tilde\gamma}(t))
\]
But $e$ is the soldering form (\ref{soldering}), so the right hand side becomes $e(\dot{\tilde\gamma}(t)) = {\tilde\gamma}^{-1}(\dot\gamma(t)) = z$.  Thus we find that $\dot g(t) = g(t) z$, or more properly
\[
   \dot g(t) = L_{g(t)\ast} z. 
\]
That is, $g(t) = \exp(tz)$ for a constant $z\in \ggh$.  This is just the condition $t\mapsto g(t)H$ be a geodesic starting at the identity coset in the symmetric space $G/H$ (cf. Theorem IV.3.3 of Helgason \cite{helgason}).  This shows a geodesic in $M$ has geodesic development; for the converse, simply reverse this argument.  
\end{proof}

\sect{Observer space}
\label{sec:observerspace}

We now move on to the main goal of this paper: to understand general relativity in terms of {\em observer space}, rather than spacetime,  setting the geometric stage for modifications that could describe physics beyond general relativity.  We do this using the basic machinery of Cartan geometry described in section \ref{sec:spacetime}, which will allow us to define observer space geometries without relying on spacetime.  First, however, we describe in general terms what sort of geometric features observer space inherits from the geometry of a Lorentzian spacetime and its tangent bundle.  

\subsection{The observer space of a Lorentzian spacetime}
\label{spacetime-observer-spaces}

Given a Lorentzian spacetime $(M,g)$ equipped with a time orientation, we define the \define{observer space} $O$ of $\M$ to be the space of all unit future timelike vectors, also known as the future unit tangent bundle of $\M$.   We review here what sort of geometric features such an observer space naturally has: a contact structure, a Sasaki metric, and other canonical distributions.  Mathematically, this is standard material (see e.g.\ \cite{arnold,michorbook}); what is new here is the interpretation in terms of observer space.

\subsubsection*{Contact geometry} 

The observer space $O$ has a canonical {\em contact structure}.  Recall that on any $(2n+1)$-dimensional manifold, a contact form is a 1-form $\alpha$ that is maximally nonintegrable in the sense that the $(2n+1)$-form $\alpha \we d\alpha \we \cdots \we d\alpha$ is nowhere-vanishing, and hence is a volume form.   In contrast, note that the Frobenius integrability condition is that $\alpha\we d\alpha$ vanishes identically, so a contact form is indeed highly nonintegrable.  A \define{contact structure} on a manifold is the hyperplane distribution given by the kernel of a contact form.  The kernel does not change if we multiply the contact form by any nowhere-vanishing function, so we can also define a contact structure as an equivalence class of contact forms under multiplication by nonvanishing functions.  

In fact, given a Lorentzian manifold $\M$, its observer space $O$ has not only a canonical contact structure, but a canonical contact form inducing this structure.  This contact form, 
\[
          \alpha\maps TO \to \R
\]
is given by: 
\ben
\label{contact-structure}
\alpha(v) = g(p(v),\pi_* v)\,,\een where $p\maps TO\to O\subseteq T\M$ is the tangent bundle of $O$, and $\pi_*\maps TO\to T\M$ is the differential of the observer bundle $\pi \maps O \to \M$. 

The contact form $\alpha$ induces a \define{Reeb vector field} $\mathbf{r}$ on observer space, the unique vector field normalized by $\alpha$ and whose flow preserves $\alpha$, that is
\ben
    \alpha(\mathbf{r}) = 1 \qquad \text{and}\qquad \L_{\mathbf{r}}\alpha = 0\,,
\een
where $\L$ denotes the Lie derivative.   The Reeb vector field is the restriction to observer space of the `geodesic spray', the vector field on $TM$ whose integral curves give geodesics on $M$ for the Levi-Civita connection of $g$ (see e.g.\ \cite[Ch. V]{michorbook}).   Thus ``inertial observers'', who follow timelike geodesics in spacetime, simply follow the flow of the Reeb vector field in observer space. 

\subsubsection*{The Sasaki metric}

Besides being a contact manifold, the observer space of a Lorentzian spacetime is also naturally a semi-Riemannian manifold.  In fact, it has natural metrics of both Lorentzian and Riemannian signature.   

The key to this is that the double tangent bundle of any semi-Riemannian manifold $(M,g)$ has a pair of projections down to the tangent bundle:
\ben
   \xymatrix@C=-.5em{
   &  TT\M \ar[dl]_{\pi_*} \ar[dr]^{\kappa} \\
   T\M & & T\M
   }
\een
The first of these, $\pi_\ast$, is just the differential of the tangent bundle $\pi\maps TM\to M$, while the second, $\kappa$ is the `connection mapping' defined using the Levi-Civita connection $\nabla$ on $M$.  More precisely, since a vector field on $M$ is, in particular, a map $\xi\maps M \to TM$, its differential is $\xi_\ast\maps TM \to TTM$, and $\kappa$ is uniquely determined by requiring 
\ben
   \kappa(\xi_\ast X) = \nabla_X \xi
\een
for every vector field $\xi$ and every tangent vector $X$ to $M$ (see e.g.\ \cite{dida}).   One can show that $TTM$ is a direct sum of the distributions given by the kernels of these projections:
\[
     TTM = \ker \pi_* \oplus \ker \kappa 
\]
called respectively the \define{vertical} and \define{horizontal distributions} on $TM$.  Moreover,
\[
    \tilde g(v,w) = g(\pi_*v,\pi_*w) + g(\kappa v, \kappa w)
\]
is thus a nondegenerate metric on $TM$.  In our case, where $g$ has signature $(3,1)$, $\tilde g$ has signature $(6,2)$.  The observer space $O\subseteq TM$ is a submanifold with induced metric of Lorentzian signature $(6,1)$.  This induced metric is the \define{Sasaki metric} on observer space.

On the other hand, we can also get a {\em Riemannian} metric on observer space.  The Reeb vector field is a nonvanishing timelike vector field orthogonal to the symplectic structure under the Sasaki metric.  We can thus flip the sign of the Sasaki metric in the direction of the Reeb vector field, while fixing the metric on the contact distribution.  

\subsubsection*{Spatial, temporal and boost distributions} 

The contact structure on observer space is a hyperplane distribution.  But there are other canonical distributions on observer space as well.  First, there is the distribution complementary to the contact structure, the 1-dimensional distribution spanned by the Reeb vector field, which we call the \define{temporal distribution}.   Timelike geodesics on $M$ correspond to curves on $O$ tangent to the temporal distribution.  

But also, the 6-dimensional contact distribution splits into a pair of 3-dimensional subdistributions.  The \define{boost distribution} is the kernel of $\pi_*$, where $\pi\maps O \to M$ is the restriction of the tangent bundle to observer space.  A path tangent to this distribution corresponds to simply changing observers at a fixed point in spacetime.   By definition of the contact form (\ref{contact-structure}), the boost distribution is clearly a 3-dimensional sub-distribution of the contact distribution.  The \define{spatial distribution} can be defined either as the orthogonal complement of the boost distribution in the contact distribution, or as the kernel of  $\kappa$ restricted to $O$.  In this way, each tangent space splits canonically into a direct sum: 
\ben
\label{canonical-split}
  T_oO = (\text{boost vectors}) \oplus (\text{spatial vectors}) \oplus  (\text{temporal vectors})
\een
where the first two summands constitute the contact structure.  

\subsubsection*{Momentum space}

In spacetime physics, whereas the velocity of a subluminal particle is a unit future-timelike vector, the {\em momentum} is an arbitrary future-timelike cotangent vector.  The space of momenta is thus a subspace of $T^\ast\!\M$. A particle's momentum can be dualized and then normalized via the metric to get the corresponding velocity, and conversely, the dual of the velocity times the particle's mass gives the momentum.

In observer space, on the other hand, the velocity of a particle is not a vector but just a point.  However, we can still reconstruct the space of all momenta of particles directly from observer space, without appealing to spacetime in any direct way: it is the `symplectification' of observer space.  Any $(2n+1)$-dimensional contact manifold extends canonically to a $(2n+2)$-dimensional symplectic manifold (see e.g.\ \cite{arnold}).

To recall how this works, first define a \define{contact element} at a point $o$ in a contact manifold $O$ to be a covector $\beta_o \in T^\ast_oO$ whose kernel is the contact hyperplane at $o$.  Then the \define{symplectification} $S$ of $O$ is the space of all contact elements on $O$.  In other words, $S$ is the subbundle of $T^\ast O$ whose sections are contact forms for the contact structure.  The symplectic structure on $S$: 
\[
       \om \maps TS \times TS \to \R
\]
is the differential of the canonical 1-form $\alpha \maps TS \to \R$ given by:
\[
    \alpha(v) = \beta_o(\pi_\ast v)  \qquad  v\in T_{\beta_o} S
\]
where $\pi\maps S \to O$ is the obvious bundle and $\beta_o$ is a contact element at $o\in O$. 
 
While the contact distribution alone determines the symplectification, in the cases of interest here we always have a specified contact form, and this gives a preferred section of the symplectification.  Physically speaking,  this section selects, at each point of observer space, the momentum corresponding to a particle with unit mass.   So far, we have defined $O$ as the unit future tangent bundle of a Lorentzian spacetime; in this case the contact form (\ref{contact-structure}) provides the normalizing section.  
 
\subsubsection*{Lightlike particles} 

At this point, we may wonder how massless particles fit into the observer space picture.  Since a point in observer space corresponds to the instantaneous velocity of a subluminal particle, there seems to be no room in observer space for particles traveling at the speed of light.    The answer, however, is clear: lightlike particles live in the `boundary' of observer space.  

More precisely, each fiber in the observer space bundle $O \to M$ is a copy of hyperbolic space, namely the space of all timelike velocity vectors at the same spacetime event.  If we adjoin to each fiber the space of lightrays at that event, we obtain an extension of observer space in which each fiber is a compactification of hyperbolic space, diffeomorphic to a 3-dimensional ball.   

This extended observer space, which includes `lightlike observers' on its boundary, deserves further study.  However, for the remainder of this paper, we focus on ordinary subluminal observers only.

\subsection{Model observer spaces}
\label{modelob}

To describe observer space Cartan geometry, we first need Klein geometries to serve as homogeneous models of observer space.  Fortunately, there are three obvious choices: The model spacetimes $\GH\cong G/H$ described in section \ref{Klein} are not only homogeneous, but also \define{isotropic}, meaning that $G$ acts transitively on observers as well.  Thus, to each of these models there is a corresponding model observer space.

Recall that our chosen spacetime event, $z\in Z$, has stabilizer $H$ and $\ggh = T_z \GH$.  The observers at $z$, the unit timelike vectors, are thus elements of 
\define{hyperbolic 3-space}, which we define to be a submanifold of $\ggh \cong \R^{3,1}$ in the usual way:
\ben
    \Hyp^3 := \{ y \in \ggh : \eta(y,y) = -1,\; y_0 >0\}\cong H/K\,,
\een
where we are using a $G$-invariant metric $\eta$ on $\GH$ of signature $({-}{+}{+}{+})$, and the stabilizer of $y \in \Hyp^3$ is
\ben
   K \cong \SO(3)\,. 
\een  
The natural projection from observer space down to spacetime:
\ben
\label{GK->GH}
     G/K \to G/H
\een
is a $G$-equivariant fiber bundle with standard fiber $\Hyp^3$. The fiber over an event is of course just the hyperbolic space of all velocities that observers at that event can have. 

As representations of $K$, both $\h$ and $\ggh$ in the reductive splitting (\ref{split1}) are further reducible, each splitting into a direct sum of two irreducible $K$ representations: 
\ben
    \h = \k \oplus \ghk \qquad \text{and}\qquad   \ggh = \vec \ggh \oplus \ggh_o\,.
\label{split2}
\een
Here $\k$ is the Lie algebra of $K$, and the complement $\ghk \cong \R^3$ is canonically isomorphic to  the tangent space to $H/K$ at the basepoint $\hk$.    The representation $\ggh$, corresponding to spacetime translations, naturally splits into spatial translations $\vec\ggh$ and temporal translations $\ggh_o$, from the observer's perspective. 
 
Thus the adjoint representation of $K\subseteq G$ on $\g$ splits into a direct sum of four irreducible representations, based on our choice of basepoint $\gh\in \GH$ and observer $\hk$ at $\gh$:
\ben
\label{split3} 
      \g = \k \oplus (\ghk \oplus \vec \ggh \oplus \ggh_o) 
\een
The parenthesized part is naturally identified with the tangent space to observer space at the chosen observer; the three summands correspond respectively to those in the canonical splitting (\ref{canonical-split}) of a tangent space to observer space.  

We can interpret the sum (\ref{split3}) of $K$ representations in terms of infinitesimal symmetries of observer space: 
\[
  \begin{array}{lcl}
    \k &\sim& \text{rotations around the observer} \\
    \ghk &\sim& \text{boosts, changing the observer but not the base event}\\  
    \vec \ggh &\sim& \text{spatial translations of the event/observer}\\  
    \ggh_o &\sim& \text{time translations of the event/observer.}\\  
  \end{array}
\]
It is a straightforward exercise to work out the Lie brackets.   For any of the models, we have 
\ben
\begin{array}{llll}
{}\makebox[2.2em]{$[\k, \k]$} \subseteq \k 
  & \makebox[2.2em]{$[\ghk,\ghk]$} \subseteq \k 
  & \makebox[2.2em]{$[\vec\ggh,\vec\ggh]$} \subseteq \k 
  & \makebox[2.2em]{$[\ggh_o,\ggh_o]$} =0\\
{}\makebox[2.2em]{$[\k, \ghk]$} \subseteq \ghk 
  & \makebox[2.2em]{$[\ghk,\vec\ggh]$}  \subseteq \ggh_o
  & [\vec\ggh,\ggh_o]  \subseteq \ghk  \\
{}\makebox[2.2em]{$[\k, \vec\ggh]$} \subseteq \vec\ggh 
  & \makebox[2.2em]{$[\ghk,\ggh_o]$}  \subseteq \vec\ggh \\
{}\makebox[2.2em]{$[\k, \ggh_o]$}  =0 
\end{array}
\label{commutators}
\een
where the third column is actually zero in the case $\Lambda = 0$.  The geometric interpretation of these is clear: for example, $[\ghk,\vec\ggh]  \subseteq \ggh_o$ says tiny boosts and tiny spatial translations commute up to a time translation.  

From the above chart we see that $[\k,\ghk\oplus \ggh]\subseteq \ghk\oplus \ggh$, but $ [\ghk\oplus \ggh,\ghk\oplus \ggh]$ generally has parts in both $\k$ and its complement.  Thus our model observer geometries are reductive but not symmetric.   Note that $\g = \k \oplus \ghk \oplus \vec \ggh \oplus \ggh_o$ makes $\g$ into a $(\Z/2 \times \Z/2)$-graded Lie algebra. 

Another geometric interpretation of the splitting (\ref{split3}) will be important for understanding `Cartan geometrodynamics' in section \ref{cartangeo}.   Cartan geometrodynamics is about how observer space geometry relates spacetime geometry to spatial geometry.  While general relativistic spacetimes have no canonical notion of `space', our homogeneous models of spacetime do.  In particular, given an observer, there exists a unique maximal totally geodesic hypersurface orthogonal to the observer.  The stabilizer of such a hypersurface is a subgroup $H'\subseteq G$, which acts transitively on the hypersurface.  The stabilizer of a point on this spatial slice under the action of $H'$ is the same as the stabilizer of an observer, namely $K$.  This lets us describe {\em spatial} geometry using these groups:
\[
  H' \cong \left\{ 
  \begin{array}{rl}
  \SO(4) & \text{\sf spherical group} \\
  \ISO(3) & \text{\sf Euclidean group} \\
  \SO(3,1) & \text{\sf hyperbolic group}
  \end{array}
  \right. \qquad \quad
  K = \SO(3) \quad \text{\sf rotation group.} 
\]

From this perspective, we get a rather different interpretation of the decomposition of $\g$ in (\ref{split3}):
\[
  \begin{array}{lcl}
    \k &\sim& \text{rotations around the basepoint on the spatial slice} \\
    \ghk &\sim& \text{changes of spatial slice, without changing the basepoint}\\  
    \vec \ggh &\sim& \text{translations of the basepoint, within the same spatial slice}\\  
    \ggh_o &\sim& \text{time translations of the basepoint, changing the spatial slice.}\\  
  \end{array}
\]

It is worth mentioning that the models discussed in this section are not the only possible models of observer space.  For example, starting from one of the models $(G,K)$, one could form a new model $(G',K)$ where $G'=K\ltimes (\ghk\oplus \ggh)$.   This example---which is nothing but the observer space of Galilean spacetime---is a `mutation'  of the models derived from homogeneous Lorentzian spacetimes,  meaning that the Lie algebra $\g'\cong \g$, not as Lie algebras but as representations of $K$.   Mutations carry the same essential geometric information \cite{Sharpe}.  In this paper, we use only the three models $(G,K)$ described above.

\subsubsection*{Contact structure}

The homogeneous models of observer space are homogeneous as contact manifolds, meaning that the symmetries preserve the contact structure.  We now describe the contact structure directly in the language of Kleinian geometry. 

\begin{prop}
Consider the observer space Cartan geometry $(\pi \maps G \to G/K, A)$ where $A$ is the Maurer--Cartan form.   The projection of $A$ into $\ggh_o$:
\[
\xymatrix{
  TG \ar[r]^A & \g \ar[r] & \ggh_o 
  }
\]
induces a 1-form on $G/K$.  This 1-form coincides with the negative of the contact form (\ref{contact-structure}) induced by the structure of $G/K$ as the unit future tangent bundle of $G/H$. 
\end{prop}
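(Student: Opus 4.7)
The plan is to verify descent through $\pi\maps G\to G/K$ first, and then use $G$-equivariance to reduce the identity of 1-forms to a computation at a single point.

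To see that $A_{\ggh_o}$ descends to a 1-form on $G/K$, I would check horizontality and $K$-invariance. Horizontality follows from property \ref{cartan:mc} of Definition \ref{def:cartan}: a vertical vector at $g\in G$ is of the form $(L_g)_\ast X$ for some $X\in\k$, and $A$ sends it to $X$, which has vanishing $\ggh_o$-component. For invariance, the bracket table (\ref{commutators}) shows $[\k,\ggh_o]=0$, so $K$ acts trivially on $\ggh_o$ via the adjoint action; combined with property \ref{cartan:equivariant} this gives $R_k^\ast A_{\ggh_o}=A_{\ggh_o}$. Hence $A_{\ggh_o}$ descends to a well-defined 1-form $\tilde\alpha$ on $G/K$, taking values in $\ggh_o\cong\R$ (under the identification $c\hk\leftrightarrow c$).

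Next, since the Maurer--Cartan form is left-invariant and the $G$-action on $G/K$ preserves both the bundle projection $G/K\to G/H$ and the metric that defines $\alpha$, both $\tilde\alpha$ and $\alpha$ are $G$-invariant 1-forms; transitivity of the action reduces the problem to showing $\tilde\alpha=-\alpha$ at the basepoint $[e]\in G/K$. There, the reductive splitting gives $T_{[e]}(G/K)\cong\ghk\oplus\vec\ggh\oplus\ggh_o$, with horizontal lift to $T_eG=\g$ being the obvious inclusion. For $v=v_b+v_s+v_t$ the Maurer--Cartan form yields $A(\tilde v)=v_b+v_s+v_t$, so $\tilde\alpha(v)=v_t$. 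On the other hand, the observer corresponding to $[e]$ is $\hk$ itself, which lies in $\ggh_o$ with $\eta(\hk,\hk)=-1$; the differential $\pi_\ast v\in T_{eH}(G/H)=\ggh$ equals $v_s+v_t$ (since the projection $G/K\to G/H$ kills $\ghk\subset\h$). Hence
\[
\alpha(v)=\eta(\hk,v_s+v_t)=\eta(\hk,v_t)=-v_t,
\]
using $\vec\ggh\perp\ggh_o$. The two 1-forms thus differ by a sign at $[e]$, and by $G$-invariance they differ by a sign everywhere.

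The only real obstacle is keeping the many identifications straight: the reductive decomposition of $\g$, the identification of $T_{[e]}(G/K)$ with $\ghk\oplus\vec\ggh\oplus\ggh_o$, the identification of $[e]\in G/K$ with the observer $\hk\in\Hyp^3\subseteq\ggh$, and the identification $\ggh_o\cong\R$. Once the computation is set up correctly, the negative sign is simply a manifestation of the Lorentzian signature of $\eta$, since $\hk$ points in the timelike direction.
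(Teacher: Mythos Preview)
Your proof is correct and follows essentially the same approach as the paper's: both establish that the $\ggh_o$-part of the Maurer--Cartan form descends to $G/K$, then compute the two 1-forms against the Lorentzian inner product $\eta$ and find the sign discrepancy from $\eta(\hk,\hk)=-1$. The only cosmetic difference is that the paper packages the descent and the global identification via the associated-bundle isomorphism $T(G/K)\cong G\times_K(\ghk\oplus\ggh)$ (a variant of Lemma~\ref{coframe-lemma}), whereas you check horizontality and $K$-invariance directly and then invoke $G$-invariance to reduce to the basepoint; these are equivalent formulations of the same argument.
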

\begin{proof}
First, by a slight variation of Lemma~\ref{coframe-lemma}, the projection of the Maurer--Cartan form into $\ghk \oplus \ggh$ is the same as an isomorphism $$T(G/K) \to G\times_K (\ghk \oplus \ggh).$$  Similarly, projecting further to $\ggh_o$, the $\ggh_o$ part of $A$ is the same as a 1-form on $G/K$ with values in the associated vector bundle $G\times_K \ggh_o$ over $G/K$. But this latter vector bundle is trivial, since $K$ acts trivially on $\ggh_o$, so we have just a $\ggh_o$-valued 1-form.   Taking advantage of the above isomorphism for $T(G/K)$,  this 1-form is given by:
\[
     \begin{array}{ccc}
   G\times_K (\ghk \oplus \ggh) & \to & \ggh_o \\
   {}[g,(a,w)] & \mapsto & w_o
     \end{array}
\]
where $w_o$ is the $\ggh_o$ part of $w\in \ggh$.  Namely, using the metric $\eta$ on $G/K$, 
\[
   w_o = - \eta(\hk,w) \hk\,
\]
since $\ggh_o$ is by definition the span of $\hk\in \ggh$. Using the isomorphism $\ggh_o \cong \R$ given by $\hk\mapsto 1$,  we thus get a real-valued 1-form $\tilde \alpha$ on $G/H$:
\[
     \tilde\alpha([g,(a,w)] ) = - \eta(\hk,w).
\]

On the other hand, note that the differential of the bundle map (\ref{GK->GH}) at the observer $\hk$ (corresponding to the identity coset) is just the obvious projection
$
 \ghk\oplus \ggh \to \ggh. 
$ 
Thus, the contact form (\ref{contact-structure}) is given by:
\[
     \begin{array}{rccc}
    \alpha_\hk \maps\!\!\!\!& G \times_K (\ghk \oplus \ggh)& \to & \R \\
  & [g,(a,w)] & \mapsto & \eta(\hk,w)
     \end{array}
\]
This differs from $\tilde \alpha$ by a minus sign.  
\end{proof}

For the momentum space corresponding to the model observer space, the symplectification of the contact manifold $G/K$ is canonically isomorphic to the space of future-timelike cotangent vectors to $G/H$.

\subsection{Cartan geometry of observer space}
\label{observer-cartan}

Now that we understand the homogeneous models, the general definition of Cartan geometry (def.~\ref{def:cartan}) lets us define abstractly what it means for a manifold to have the geometry of an `observer space'. 
\begin{defn}
\label{def:obgeo}
An \define{observer space geometry} $({\pi\maps P\to O},A)$ is a Cartan geometry modeled on $(G,K)$, where $G$ is $\SO(4,1)$, $\ISO(3,1)$ or $\SO(3,2)$ and $K$ is $\SO(3)$, as described in section \ref{modelob}.  
\end{defn}
This definition is intrinsic in the sense that it describes the geometry of an observer space directly, without using any underlying notion of `spacetime'.  We now turn to describing observer space Cartan geometries in more detail, and give some examples. 

\subsubsection*{Geometry of observer space connections} 

According to the splitting (\ref{split3}) of the Lie algebra $\g$, the Cartan connection {on observer space} breaks up into four irreducible pieces:
\ben
\begin{array}{ccccccccc}
 \g & = & \k &\oplus & \ghk & \oplus & \vec\ggh & \oplus & \ggh_o \\
 A & = & \Omega & + & b & + & \vec e & + & e_o
\end{array}
\label{irredpie}
\een
This can be interpreted geometrically using `rolling without slipping':  Locally, as we begin to move along some path $\gamma$ in $O$, rolling the model observer space along as we go, the transformation of the model space breaks up into:
\begin{itemize}
\item \parbox{\widthof{$e_o(\gamma'(0))$}}{$\Omega(\gamma'(0))$} ---a tiny rotation around the model observer
\item \parbox{\widthof{$e_o(\gamma'(0))$}}{$b(\gamma'(0))$} ---a tiny boost of the model observer
\item \parbox{\widthof{$e_o(\gamma'(0))$}}{$\vec e(\gamma'(0))$} ---a tiny spatial translation of the model observer
\item $e_o(\gamma'(0))$ ---a tiny time translation of the model observer
\end{itemize}
Using the commutation relations of the algebra, the curvature is:
\ben
\label{curvature-split} 
\textstyle
\begin{array}{ccl} 
     F &=& dA + \half[A,A] \\[.5em]
     &=& (d\Om + \half[\Om,\Om] + \half[b,b] + \half[\vec e, \vec e] )
      + (d_\Om b +[\vec e, e_o]) 
      + (d_\Om \vec e + [b,e_o]) 
      + (d e_o + [b,\vec e]),
\end{array}
\een
where the four parenthesized terms live respectively in $\k$, $\ghk$, $\vec\ggh$, and $\ggh_o$.

\subsubsection*{Distributions} 

In Cartan geometry based on any model $(G,K)$, a $K$-invariant structure on the tangent space $\g/\k$ of the model Klein geometry yields the same type of structure on tangent spaces of the model.  For an observer space Cartan geometry, according to (\ref{split3}) we have a $K$-invariant splitting
\[
     \g/\k \cong \ghk \oplus \vec\ggh \oplus \ggh_o
\]
and the Cartan connection thus gives a splitting of each tangent space.  To see how this works, first note that the Cartan connection gives four distributions on the total space $P$, just by taking preimages of the components along $A\maps TP \to \g$.  It will be convenient to denote the distribution corresponding to a particular subalgebra using an underline, for example:
\[
\begin{array}{ccl}
   \uk&=& A^{-1}(\k) \\
   \ughk &=& A^{-1}(\ghk) \\
   \uh &=& \uk \oplus \ughk = A^{-1}(\h) \\ & \vdots    
\end{array}
\]
Note that by definition $A(\underline{\mathfrak{q}}) = \mathfrak{q}$ for any $K$-invariant subspace $\mathfrak{q}\subseteq \g$.

Because of $K$-invariance, these distributions descend to observer space in a gauge invariant way.  The distribution $\uk$ is just the vertical distribution of the bundle, and so is trivial on the base; the distributions corresponding to $\ghk$, $\vec \ggh$ and $\ggh_o$ descend to give the local directions corresponding to boosts, spatial translations, and time translations, according to each observer.

\subsubsection*{Observer space geometry from spacetime geometry}

 Of course, the most obvious way to construct an observer space geometry is to start with a spacetime geometry.  Given a spacetime Cartan geometry, the principal $H$ bundle $\ff$ over spacetime determines an isomorphism between the (fake) observer space $\fo$ and the associated bundle $P\times_H H/K$. Defining the fake tangent bundle $\fake = \ff \times_H\ggh$ and viewing $P$ as the fake frame bundle
\ben
 \ff =\{ \text{linear isometries }  f\maps \ggh \to \fake_x,\; x\in \M \}\,,
\een
we can define \define{fake observer space} $\fo$ as the bundle of unit future-directed timelike vectors in the fake tangent bundle. We have a canonical bundle isomorphism
\ben
\begin{array}{ccc}
    \ff \times_{H} H/K & \to & \fo \\
  {} [f_x,\hk] &\mapsto & f_x(\hk)\,.
\end{array}    
\een
Starting from $\ff$, we can fix an observer $\hk\in H/K$ which gives us a projection map
\ben
\begin{array}{ccc}
    \ff & \to & \fo \\
   {} f_x &\mapsto & f_x(\hk)
\end{array}    
\een
allowing us to identify $\ff$ as a principal $K$ bundle over $\fo$. The original Cartan connection on spacetime becomes a Cartan connection on observer space:
\begin{lemma}[Observer space geometry from spacetime geometry]
\label{lem:cart}
If $(\pi\maps P \to \M,A)$ is a Cartan geometry with model $(G,H)$, then $(\pi\maps P \to \fo,A)$, where $\fo=P\times_H H/K$ is a Cartan geometry with model $(G,K)$.
\end{lemma}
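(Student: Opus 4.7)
The plan is to verify the three defining properties of Definition~\ref{def:cartan} for the model $(G,K)$, after first establishing that $\pi\maps P\to \fo$ really is a principal $K$-bundle. All of the work is geometric bookkeeping: nothing in the 1-form $A$ itself changes, only the group by which we are quotienting.

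First I would identify the bundle structure. Since $K\subseteq H$ is closed, the right $H$-action on $P$ restricts to a free right $K$-action, so $P\to P/K$ is a principal $K$-bundle. The map
\[
 P/K \;\longrightarrow\; P\times_H (H/K),\qquad [p]_K \;\longmapsto\; [p,\hk]_H
\]
is well-defined (the class $[p,\hk]_H$ is unchanged by replacing $p$ by $pk$, since $k\hk=\hk$), smooth, and has a smooth inverse sending $[p,hK]_H$ to $[ph]_K$ (independent of the representative $h$ of $hK$, since changing $h\mapsto hk$ replaces $ph$ by $phk$, which has the same $K$-orbit). Hence $P/K\cong \fo$ canonically, and $\pi\maps P\to \fo$ is a principal right $K$-bundle.

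Next I would check properties C1--C3 for the form $A$ with respect to this new bundle. Property C1 (pointwise linear isomorphism $A_p\maps T_pP\to \g$) is independent of any choice of base: it was already assumed for the original Cartan geometry and is unaffected by reinterpreting $P$ as a $K$-bundle. Property C2 requires $(R_k)^\ast A=\Ad(k^{-1})\circ A$ for all $k\in K$, which is an immediate restriction of the corresponding identity for all $h\in H$.

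The only step that requires a moment's thought is C3. The vertical distribution of $P\to\fo$ consists of fundamental vector fields $X^\ast$ generated by $X\in\k$; these are a distinguished subset of the fundamental vector fields generated by $\h$, which form the vertical distribution of the original bundle $P\to \M$. The original C3 therefore gives $A(X^\ast)=X$ for every $X\in\h$, and in particular for every $X\in\k$. Under any local trivialisation of $P\to \fo$ compatible with the larger trivialisation of $P\to \M$, this is precisely the pullback of the left-invariant Maurer--Cartan form on $K$, as required. This is the only place where the inclusion $\k\subseteq\h$ is used, and it is the step I would expect a reader to want spelled out, though there is no real obstacle: the Maurer--Cartan form on $K$ is literally the restriction of the Maurer--Cartan form on $H$ to $TK\subseteq TH|_K$. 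With all three properties verified, $(\pi\maps P\to \fo, A)$ is a Cartan geometry modelled on $(G,K)$.
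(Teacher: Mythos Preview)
Your proof is correct and follows essentially the same approach as the paper's: both arguments observe that property~\ref{cartan:nondeg} is independent of the base, while properties~\ref{cartan:equivariant} and~\ref{cartan:mc} for $K$ and $\k$ follow by restriction from the corresponding properties for $H$ and $\h$. You supply more detail than the paper does---in particular the explicit identification $P/K\cong \fo=P\times_H H/K$, which the paper leaves implicit---but the underlying idea is identical.
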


\begin{proof}
First note that a Cartan connection on $\ff\to\M$ and a Cartan connection on $\ff\to\fo$ are both $\g$-valued 1-forms on $\ff$. If $A$ is a Cartan connection on $\ff\to\M$, it has properties \ref{cartan:nondeg}-\ref{cartan:mc} in definition \ref{def:cartan}: It is a linear isomorphism $T_p\ff\to\g$ at each $p\in\ff$, it transforms under the adjoint of $H$, and is the Maurer-Cartan form on invariant vector fields associated to the Lie algebra $\h$. But the second and third properties imply the same properties for the subgroup $K\subseteq H$ and the subalgebra $\k$. Hence $A$ is also a Cartan connection on $\ff\to\fo$.
\end{proof}

\noindent The converse is of course not true: the $K$ action on a principal $K$ bundle need not extend to an $H$ action, and even if it does, properties \ref{cartan:equivariant} and \ref{cartan:mc} for the group $K$ do not imply those for $H$.

As explained in section \ref{stcartan}, a Cartan connection on the bundle $\ff\to\M$ allows us to reconstruct the `real' bundles from the fake ones, since its $\ggh$ part defines a coframe and hence an isomorphism between the bundles $TM$ and $\fake$.

\subsection{Observer fields}
\label{obsfieldsec}

From the observer space perspective, the geometry of {\em spacetime} is always viewed locally in relation to some particular observer.  To study spacetime geometry over extended regions, it is thus helpful to single out one observer at each point by specifying a `field of observers'. As in our previous discussions we will start with a Lorentzian spacetime with its associated Cartan connection on the frame bundle $\fb$. 
\begin{defn} 
If $M$ is a Lorentzian spacetime with observer space $O$, a \define{field of observers} is a section of the bundle $O\to M$. 
\end{defn}
\noindent In other words an observer field is a unit future-directed timelike vector field.  These exist on any time-oriented Lorentzian manifold. 

An observer field $u$ gives a vector field on $\M$, and hence a 1-form $\hat u$, defined by $\hat{u}(v)= g(u,v)$.  However, we can also obtain this dual 1-form by pulling back the contact form $\alpha$ along $u$:
\begin{prop}
The 1-form dual to $u$ is $\hat u = u^*\alpha$. 
\end{prop}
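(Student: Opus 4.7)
The plan is to unwind the definitions directly: the statement is essentially a tautology once one recognizes that $u$ being a section of $\pi\maps O\to M$ makes the composite projection act as the identity.

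First, I would pick a point $x\in M$ and a tangent vector $v\in T_xM$, and compute $(u^\ast\alpha)(v) = \alpha(u_\ast v)$, where $u_\ast\maps TM\to TO$ is the differential of the observer field viewed as a smooth map $u\maps M\to O$. Then I would apply the definition (\ref{contact-structure}) to the vector $u_\ast v\in T_{u(x)}O$, yielding
\ben
(u^\ast\alpha)(v) \;=\; g\bigl(p(u_\ast v),\,\pi_\ast(u_\ast v)\bigr),
\een
where $p\maps TO\to O\subseteq TM$ is the tangent bundle projection.

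Next, I would identify each factor. For the first factor, $p(u_\ast v)$ is by definition the basepoint of the vector $u_\ast v\in T_{u(x)}O$; this basepoint is $u(x)\in O$, which as an element of $O\subseteq TM$ is the observer vector at $x$. For the second factor, $\pi_\ast(u_\ast v) = (\pi\circ u)_\ast v = v$, since $u$ is a section of the observer bundle, so $\pi\circ u = \mathrm{id}_M$. Combining these gives $(u^\ast\alpha)(v) = g(u(x),v) = \hat u(v)$, which is the claim.

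There is really no obstacle here beyond bookkeeping; the only subtle point worth spelling out is the double role played by $u(x)$: it is simultaneously the point in $O$ over which $u_\ast v$ lives (via $p$) and the tangent vector in $T_xM$ against which $v$ is paired by $g$ (via the inclusion $O\hookrightarrow TM$). Making this explicit is the whole content of the proof.
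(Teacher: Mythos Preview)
Your proof is correct and follows essentially the same direct computation as the paper: unwind $(u^\ast\alpha)(v) = g(p(u_\ast v),\pi_\ast u_\ast v)$, then identify $p(u_\ast v) = u$ and $\pi_\ast u_\ast v = v$ using that $u$ is a section. Your version is slightly more explicit about the double role of $u(x)$, but the argument is the same.
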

\begin{proof}
Directly calculating the pullback of $\alpha$ along $u$, we get: 
\begin{align*}
u^\ast \alpha(v) &=\alpha(u_\ast v) \\
&= g(p(u_\ast v), \pi_\ast u_\ast v)\\
&=g(u,v)\\
&=\hat u(v),
\end{align*}
since by definition $\hat u$, the dual of $u$, is the 1-form on $\M$ given by
$   \hat u(v) = g(u,v)$. 
\end{proof}

A field of observers $u$ and its corresponding field of \define{co-observers} $\hat{u}$ give us a canonical way to split differential forms into spatial and temporal parts.   First \define{interior multiplication} by $u$ is the grade -1 map: 
\[
    \iota_u\maps \Omega^p(\M) \to \Omega^{p-1}(\M) 
\]
defined on 1-forms $X$ by $\iota_u X=X(u)$, and on higher forms by requiring it to be a graded derivation:
\ben
\iota_u(X\wedge Y)=(\iota_u X)\wedge Y+(-)^p X\wedge \iota_u Y\,,
\een
where $X$ is a $p$-form. 

\begin{defn} 
\label{def:temp}
We say a differential form $X$ on $\M$ is \define{temporal} if $\hat u \we X = 0$, and \define{spatial} if $\iota_u X = 0$. 
\end{defn}
It is then easy to check that any form $X$ splits into spatial and temporal parts as:
\ben
  X = \underbrace{(X - \hat u \we \iota_u X) }_\text{\sf spatial} +\underbrace{(\hat u \we \iota_u X) }_\text{\sf temporal} 
           =: X^\perp +  X^\parallel 
\een
and that the spatial and temporal projections are grade 0 derivations.   Similarly, we can define spatial and temporal differentials that act on differential forms as
\ben
d^{\perp}X=dX-\hat{u}\wedge\pounds_u X\,,\quad d^{\parallel}X=\hat{u}\wedge \pounds_u X\,,
\label{different}
\een
where $\pounds_u = \iota_u d + d \,\iota_u$ is the usual Lie derivative.

So far, all of this assumes a metric given from the outset.  As we have seen, we can avoid this by starting with a `fake frame bundle' $\ff$ and its associated fake tangent bundle $\fake$ with metric $\eta$.
\begin{defn} 
Given a fake frame bundle $\ff$ with associated `fake observer space' $\fo \cong P\times_H H/K$, a \define{field of internal observers} is a section of the bundle $\fo \to M$.   
\end{defn}
\noindent Such an internal observer field $y$ reduces $\ff$ to a principal $K$ bundle: $P$ is a principal $K$ bundle over $\fo$, and this can be pulled back along $y$ to a principal $K$ bundle $Q_y\to \M$: 
\ben
\label{reduction}
\xymatrix{
Q_y \ar[r]^{i_y} \ar[d] & \ff \ar[d] \\
\M \ar[r]^{y} & \fo
}
\een
Thinking of $\ff$ as a fake frame bundle, $Q_y$ corresponds to the bundle of frames that map a given (fixed) observer $\hk_o\in\ggh$ into $y(x)\in\fake_x$, with an obvious inclusion map $i_y$ into $\ff$.

The internal observer field also lets us split fields living in any associated vector bundle of $P$ into various components, according to how the relevant $H$ representation splits when pulled back to $K$.  For instance, the fake tangent bundle $\fake$ splits into {\em internal}  `temporal' and `spatial' parts: 
\ben
     \fake \cong P \times_{H} \ggh \cong (Q_y \times_{K}\ggh_o) \oplus_\M (Q_y \times_{K} \vec \ggh)\,,
\een
where $\oplus_\M$ denotes the fiberwise direct sum of vector bundles.   Similarly, the internal observer splits fields valued in the bundle $\Ad(P) = P \times_H \h$ into a $\k$ part and a $\ghk$ part, according to (\ref{split2}). 

Next, suppose we have not only a fake frame bundle, but a Cartan connection on it.  The $\ggh$ part gives us a coframe field, hence a specific isomorphism $T\M \cong \fake$.   Using this isomorphism, a field of internal observers obviously corresponds to a field of observers for the metric and orientation induced by the coframe field.  This same linking of `internal' and `spacetime' observers is the key to our construction of covariant Ashtekar variables \cite{lorentz}, which we also review in section \ref{cartangeo} with the benefit of observer space.   But first, we consider what happens to the Cartan connection when we reduce the frame bundle to a $K$ bundle via an observer field. 

For a given observer field $y\maps\M\to\fo$, we can use the inclusion $i_y$ in (\ref{reduction}) to pull back a Cartan connection $A$ on $\ff\to\M$ to $Q_y$:
\ben
\tilde{A}=i_y^*A:TQ_y\rightarrow\g\,.
\een
It is clear that $\tilde{A}$ cannot be a Cartan connection for the bundle $Q_y\to\M$, since $T_q Q_y$ and $\g$ do not have the same dimension. But, it has all other essential features of a Cartan connection:
\begin{prop}
\label{prop:pullb}
Let $(\pi\maps \ff \to \M, A)$ be a spacetime Cartan geometry with model $(G,H)$, and let $i\maps Q \to P$ be a reduction of $P$ to a principal $K$-bundle.  The pullback $\tilde{A}:=i^*A$ of the Cartan connection $A$ satisfies:
\begin{enumerate}
\item For each $q\in Q$, the projection of $\tilde{A}$ to $\k\oplus\ggh$, $(\tilde{A}_q)_{\k\oplus\ggh}\maps T_q Q\to \k\oplus\ggh$ is a linear isomorphism;
\item $R_k^\ast \tilde{A} = \Ad(k^{-1})\tilde{A} \quad \forall k \in K$;
\item $\tilde{A}$ restricts to the Maurer--Cartan form on vertical vectors.  
\end{enumerate}
\end{prop}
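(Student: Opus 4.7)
The plan is to verify the three conclusions in turn, deferring the substantive nondegeneracy statement to the end; conclusions 2 and 3 should follow essentially by inheritance from the corresponding properties of $A$ on $P \to M$.

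For conclusion 2, the embedding $i \maps Q \to P$ is $K$-equivariant by construction---the $K$-action on $Q$ is the restriction, along $i$, of the $H$-action on $P$---so $i \circ R_k = R_k \circ i$ for every $k \in K$. Pulling back property \ref{cartan:equivariant} of $A$ along $i$ then gives $R_k^* \tilde A = \Ad(k^{-1}) \tilde A$. For conclusion 3, a vertical vector at $q \in Q$ is the value of a fundamental vector field $X^\#$ for some $X \in \k$; its pushforward $i_* X^\#$ is the corresponding fundamental vector field for the $K$-action on $P$, which in particular is vertical for $P \to M$. Property \ref{cartan:mc} of $A$ therefore gives $\tilde A(X^\#) = A(i_* X^\#) = X$, exactly the Maurer--Cartan form on vertical vectors of $Q \to M$.

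Conclusion 1 is the main step. Since $\dim T_q Q = \dim M + \dim K = 7 = \dim(\k \oplus \ggh)$, it suffices to show that the projection of $\tilde A$ to $\k \oplus \ggh$ is injective. My plan is to assume $\tilde A(v) \in \ghk$, i.e.\ that the $\k$- and $\ggh$-components of $\tilde A(v)$ both vanish, and deduce $v = 0$. The key input is that the $\ggh$-valued part of the spacetime Cartan connection is, by Lemma \ref{coframe-lemma}, the coframe on $M$: it is horizontal for $\pi_{M*} \maps TP \to TM$ and descends to a fiberwise isomorphism on the quotient. Hence $\tilde A(v)_\ggh = A(i_* v)_\ggh = 0$ forces $\pi_{M*} i_* v = 0$, so $v$ must be vertical in $Q \to M$. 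Conclusion 3, just established, then shows $\tilde A(v) \in \k$; combined with $\tilde A(v) \in \ghk$ and $\k \cap \ghk = 0$, this gives $\tilde A(v) = 0$, and injectivity of $A$ finally forces $v = 0$.

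The main place requiring care is confirming that vertical vectors in $Q \to M$ really map into $\k \subset \h$ rather than into some larger subspace of $\h$: this uses precisely the reduction diagram (\ref{reduction}), which distinguishes the $K$-action on $Q$ within the ambient $H$-action on $P$. Once that observation is in hand, the remaining manipulations are essentially algebra in the direct sum $\g = \k \oplus \ghk \oplus \ggh$ together with the defining properties \ref{cartan:nondeg}--\ref{cartan:mc} of the spacetime Cartan connection.
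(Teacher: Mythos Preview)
Your proposal is correct and follows essentially the same approach as the paper. For conclusions 2 and 3 you argue exactly as the paper does, via $K$-equivariance of $i$ and the fact that $i_\ast$ maps verticals to verticals; for conclusion 1, both you and the paper reduce to injectivity and show that $\tilde A(v)\in\ghk$ forces $v$ to be vertical in $Q\to\M$ and hence $\tilde A(v)\in\k\cap\ghk=0$ --- the paper phrases this as ``the preimage of $\ghk$ under $A$ is the space of vectors canonically associated with $X\in\ghk$'', while you phrase it via the horizontality of the coframe $e$, but these are the same observation that $A^{-1}(\h)$ is the vertical subspace of $P\to\M$.
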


\begin{proof}
Note that the inclusion $i$ is $K$-equivariant: $i\circ R_k=R_k\circ i$. It then follows from $(R_k)^\ast A = \Ad(k^{-1}) A$ that $(R_k)^* \tilde A = (R_k)^* i^* A = i^*  (R_k)^* A = i^* (\Ad(k^{-1})A) = \Ad(k^{-1})\tilde A$, which shows 2). Similarly $i_*$ maps vertical vectors in $Q \to\M$ to vertical vectors in $\ff\to\M$, which shows 3) for $\tilde{A}$ since it holds for $A$. Lastly, to show the first property it suffices to show that $(\tilde{A}_q)_{\k\oplus\ggh}$ is injective. If $(\tilde{A}_q)_{\k\oplus\ggh}(V_q)=0$, then $A_{i(q)}(i_* (V_q))\in\ghk$. But the preimage of $\ghk$ under $A$ is the space of vectors canonically associated with $X\in\ghk$. For $i_*(V_q)$ to be in this space we must have $V_q=0$.
\end{proof}

This proposition has an immediate corollary that works only for the {\em Minkowski} observer space model, with $G\cong \ISO(3,1)$.  In this case $\k\oplus\ggh$ is a Lie subalgebra---the Lie algebra of the subgroup $J=\SO(3)\ltimes \R^4$ consisting of rotations and spacetime translations. The proposition therefore implies $(Q \to\M,\;\tilde{A}_{\k\oplus\ggh})$ is a Cartan geometry with model $(J,K)$.  For the other models, there is no subgroup analogous to $J$: according to  (\ref{commutators}), the bracket of a spatial translation in $\vec{\ggh}$ and a time translation in $\ggh_o$ is a boost, so that any subgroup including all spacetime `translations' must also include boosts.  Thus, for the models with $\Lambda\neq 0$, we do not get a Cartan geometry of this sort.  

On the other hand, using only spatial translations, and discarding the time translations, we can get a Cartan connection for any of the three models---not on spacetime but on {\em space}.  This leads to the subject of `Cartan geometrodynamics'.

\subsection{Cartan geometrodynamics and covariant Ashtekar variables}
\label{cartangeo}

Assume we have a Cartan connection on $\ff\to\M$ and an internal observer field $y$.  This gives us the observer field $u$, as well as the principal $K$ bundle $Q_y \to M$.  Also assume that there is a totally spatial hypersurface $\S$ (``space'') of $\M$, meaning a codimension 1 submanifold such that each $T_x\S$ consists entirely of spatial vectors: $\hat{u}(v)=0$ for any $v\in T_x\S$. Such $\S$ exists whenever the Frobenius condition $\hat{u}\wedge d\hat{u}=0$ holds. 

Pulling back the fake frame bundle $\ff$ along the embedding of $\S$ into $\M$ defines a spatial fake frame bundle $\ff_\S$ over $\S$, a principal $K$ bundle. Because of the way we defined $\S$, this is the bundle of fake frames 
\ben
 \ff_\S \cong\{  \text{proper linear isometries }  E\maps \R^3 \to (\fake_x)^{y},\; x\in \S  \}\,.
\een
where $(\fake_x)^{y}$ is the subspace of $\fake_x$ orthogonal to $y(x)$ in the metric $\eta$ on $\fake$. 

We have an inclusion map $i_{y,\S}\maps\ff_\S\to Q_y$ which maps $E$ to its unique extension to an isometry $\ggh \to \fake_x$ respecting time orientation, resulting in the following diagram:
\ben
\xymatrix{
\ff_\S \ar[r]^{i_{y,\S}} \ar[d] & Q_y \ar[r]^{i_y} \ar[d] & \ff \ar[d] \\
\S \ar[r] & \M \ar[r]^{y} & \fo
}
\een
Pulling back the 1-form $\tilde{A}=i_y^*A$ on $Q_y$ along $i_{y,\S}$ gives a $\g$-valued 1-form on $\ff_\S$ which now always defines a Cartan geometry.
\begin{prop}
\label{prop:pullb2}
The projection ${\bf A}$ of the 1-form $i_{y,\S}^*\tilde{A}=i_\S^* i_y^* A$, the pullback of the 1-form $\tilde{A}$ on $Q_y\to M$ to the bundle $\ff_\S\to\S$, to $\k\oplus\vec\ggh$ is a Cartan connection; $(\ff_\S\to\S,\;{\bf A})$ is a Cartan geometry modeled on the Klein geometry $(H',K)$ where $H'$ is the Lie group with Lie algebra $\k\oplus\vec\ggh$.

\end{prop}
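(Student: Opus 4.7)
The plan is to verify directly the three Cartan connection axioms (C1)--(C3) of Definition \ref{def:cartan} for $\mathbf{A}$, following the template of Proposition \ref{prop:pullb} but now also exploiting the defining property of $\S$: that its tangent spaces consist entirely of spatial vectors in the sense of Definition \ref{def:temp}.

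First, I would verify that the target model $(H',K)$ is consistent. From the bracket table (\ref{commutators}), $\h':=\k\oplus\vec\ggh$ is a Lie subalgebra of $\g$ (and in fact $(\h',\k)$ is a symmetric pair), so we may take $H'$ to be the connected Lie group with Lie algebra $\h'$, recovering $\SO(4)$, $\ISO(3)$, or $\SO(3,1)$ according to the sign of $\Lambda$. The dimensions required for a Cartan geometry also match: $\dim\ff_\S = \dim\S+\dim K = 6 = \dim\h'$.

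Axioms (C2) and (C3) for $\mathbf{A}$ then follow essentially formally from the corresponding properties of $A$. The inclusion $i_{y,\S}\maps\ff_\S\to Q_y$ is $K$-equivariant, so pullback preserves the $K$-equivariance of $\tilde A$ established in Proposition \ref{prop:pullb}; the projection $\g\to\k\oplus\vec\ggh$ is also $K$-equivariant, since (\ref{split3}) is a decomposition into irreducible $K$-representations. Vertical vectors of $\ff_\S\to\S$ push forward along $i_y\circ i_{y,\S}$ to vertical vectors of $\ff\to\M$, on which $A$ acts as the Maurer--Cartan form with values in $\k\subset\g$; this $\k$-valued output is untouched by the projection to $\k\oplus\vec\ggh$.

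The substantive step is (C1): that $\mathbf{A}_q\maps T_q\ff_\S\to\h'$ is a linear isomorphism. Since the dimensions agree, it suffices to prove injectivity. Suppose $\mathbf{A}(V_q)=0$, so that $A((i_y\circ i_{y,\S})_* V_q)\in\ghk\oplus\ggh_o$. The $\ggh$-component of this quantity is the soldering form evaluated on the pushforward, which amounts to reading the projection $\pi_*V_q\in T_x\S$ in the frame $i_y(i_{y,\S}(q))$; since this frame sends $\hk_o$ to $y(x)$ by the defining property of $Q_y$, and since $\hat u(\pi_*V_q)=0$ by spatiality of $\S$, the result lies in $\hk_o^\perp=\vec\ggh$. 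Combined with membership in $\ghk\oplus\ggh_o$ and the transversality $\vec\ggh\cap(\ghk\oplus\ggh_o)=0$, this $\ggh$-component must vanish. Hence $V_q$ projects to zero in $\S$ and is therefore vertical in $\ff_\S$, forcing $A((i_y\circ i_{y,\S})_* V_q)\in\k$; intersecting once more with $\ghk\oplus\ggh_o$ gives zero, and nondegeneracy of $A$ yields $V_q=0$.

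The main obstacle is this injectivity step: the entire argument hinges on translating the spacetime condition of spatiality of $\S$ into the $\ggh$-valued statement that the soldering form lands in $\vec\ggh$, and then using the $K$-invariant splitting $\g=\k\oplus\ghk\oplus\vec\ggh\oplus\ggh_o$ to rule out the $\ggh_o$-direction. Without the spatial hypothesis on $\S$, this $\ggh$-component could leak into $\ggh_o$ and the argument collapses---which is precisely the geometric reason the construction requires a spatial slice adapted to the observer field.
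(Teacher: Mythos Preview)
Your proposal is correct and follows essentially the same route as the paper: (C2) and (C3) are dispatched formally via $K$-equivariance of the inclusion and preservation of vertical vectors, and the substantive work is (C1), which hinges on the spatiality of $\S$. The packaging of (C1) differs slightly: the paper first invokes Proposition~\ref{prop:pullb} to conclude that the image of $T_r\ff_\S$ under $\tilde A$, projected to $\k\oplus\ggh$, is six-dimensional, containing all of $\k$ together with a three-dimensional \emph{spacelike} subspace of $\ggh$, and then observes that a spacelike subspace projects injectively onto $\vec\ggh$. Your argument is sharper in that it exploits directly that frames in $Q_y$ send $\hk_o$ to $y(x)$, so the soldering form on $T\S$ already lands in $\vec\ggh = \hk_o^\perp$ rather than merely in a spacelike subspace; but this is the same geometric content, and the two arguments are interchangeable.
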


\begin{proof}
Again we need to show that ${\bf A}$ satisfies properties \ref{cartan:nondeg}-\ref{cartan:mc} in definition \ref{def:cartan}. As in proposition \ref{prop:pullb}, the second property follows from compatibility of the inclusion with the action of $K$, and the third from the fact that the inclusion maps vertical vectors to vertical vectors. For the first property, by proposition \ref{prop:pullb}, the projection of the image of $(i_{y,\S})_*T_r\ff_\S$ (for $r\in\ff_\S$) under $\tilde{A}$ to $\k\oplus\ggh$ is a six-dimensional subspace of $\k\oplus\ggh$. It contains all of $\k$ by property 3, and a three-dimensional spacelike subspace of $\ggh$ in the metric $\eta$. The projection of such a subspace to $\vec\ggh$ must also be three-dimensional. 
\end{proof}

The Cartan connection ${\bf A}$ on $\ff_\S\to\S$ is the basic ingredient for the picture of Ashtekar variables as Cartan geometrodynamics. If we have not just one spatial hypersurface $\S$ but a foliation of $\M$ by totally spatial hypersurfaces of identical topology, we can, just as in usual geometrodynamics, identify the hypersurfaces and view them as spatial geometries evolving in time, where `time' is the function $t$ associated to the field of co-observers $\hat{u}$ which is of the form $\hat{u}=N\,dt$ if the Frobenius condition $\hat{u}\wedge d\hat{u}=0$ holds. On each of these `constant time slices' $\S_t$ we have a Cartan geometry given by $(\ff_{\S_t}\to\S_t,\;{\bf A}_t)$ modeled on $H'/K$; we call this \define{Cartan geometrodynamics}. 

In our previous discussion of Cartan geometrodynamics \cite{lorentz} the Cartan connection on each spatial slice was modeled on hyperbolic space $\Hyp^3$ since, without the backdrop of observer space, this was the only 3-dimensional observer space in sight.  Here, we clearly see the role of the group $H'$, which depends on the cosmological constant chosen for the model spacetime $\GH\cong G/H$.  The Cartan geometry on each spatial slice is most naturally modeled on hyperbolic space, Euclidean space or the sphere, according to whether the spacetime model is anti-de Sitter, Minkowski, or de Sitter, respectively.   

We now detail the construction of Lorentz-covariant Ashtekar variables \cite{lorentz} in order to clarify the relation between Ashtekar variables and Cartan geometrodynamics. The $\g$-valued connection $A$ on the $K$-bundle $P\to\fo$ breaks up into the irreducible pieces as in (\ref{irredpie}):
\[
\begin{array}{ccccccccc}
 \g & = & \k &\oplus & \ghk & \oplus & \vec\ggh & \oplus & \ggh_o \\
 A & = & \Omega & + & b & + & \vec e & + & e_o
\end{array}
\]
Pulling back $A$ along the internal observer field $y$, we obtain the connection $\tilde{A}$ on $Q_y\to\M$ which splits similarly:
\ben
\begin{array}{ccccccccc}
\tilde{A} & = & y^* \Omega & + & y^* b & + & E & + & \hat{u}\,.
\end{array}
\label{decompo1}
\een

The parts of $\tilde{A}$ valued in $\vec\ggh$ and $\ggh_o$ play a special role: they determine the notions of spatial and temporal vectors and forms.   In particular, spatial vectors live in the kernel of $\hat{u}$, while temporal vectors live in the kernel of the \define{triad} $E$; notions of spatial and temporal forms follow from this.   We can define the field of observers $u$ by requiring it to be spatial and normalized by $\hat u$:
\ben
E(u)=0\,,\quad \hat{u}(u)=1\,.
\een
All this agrees with definition \ref{def:temp}.

We can then split $y^*\Omega$ and $y^* b$ into their spatial and temporal parts,
\ben
{\bf\Xi}:= y^*{\Omega}(u)\,,\quad {\bf \Omega} := y^*\Omega - \hat{u}\,{\bf\Xi}\,,\quad \xi:=y^*b(u)\,,\quad K := y^*b - \hat{u}\,\xi\,,
\label{decompo2}
\een
to obtain the variables needed to describe generalized canonical gravity in \cite{lorentz} (where we only defined the $\h$-valued scalar $\Xi={\bf\Xi}+\xi$). Note that ${\bf\Omega}(u)=K(u)=0$, as required.

Starting from the Palatini action for general relativity as a functional of the $\h$ and $\ggh$ parts $\omega$ and $e$ of a Cartan connection (identified with $\tilde{A}$) and decomposing the variables and their derivatives further according to (\ref{decompo1}), (\ref{decompo2}), and (\ref{different}), we then recover the usual formulation of canonical gravity in connection variables.  The dynamical variable conjugate to the triad (or rather the inverse triad $[E,E]$) is the $\k$-valued \define{Ashtekar-Barbero connection}
\ben
A_{{\rm A.B.}} := {\bf \Omega} + \gamma \star K
\een
where $\star$ is the Hodge dual on $\h$ and $\gamma$ is the Immirzi parameter, giving the relative weight between the two terms in the inner product used to define the Palatini action; cf. (\ref{forminner}) and the discussion below. The variables $E,\bf\Omega$ and $K$ must satisfy the Gauss constraints
\ben
[K,[E,E]]=0\,,\quad d^{\perp}_{\bf\Omega}[E,E]=0\,,
\label{gauss}
\een
where the spatial covariant differential $d^{\perp}_{\bf\Omega}$ acts as $d^{\perp}_{\bf\Omega}X=d^{\perp}X+[{\bf\Omega},X]$. They are also subject to the diffeomorphism and Hamiltonian constraints which can be given in the form
\ben
\left[E,\mathfrak{R}[{\bf\Omega}]+\half[K,K]+d^{\perp}\hat{u}\,{\bf\Xi}\right]=0\,,\quad \left[E,d^{\perp}_{\bf\Omega}K + d^{\perp}\hat{u}\,\xi\right]=0
\een
where $\mathfrak{R}[{\bf\Omega}]:=d^{\perp}{\bf\Omega}+{\bf\Omega}\wedge{\bf\Omega}$ is the spatial curvature of ${\bf\Omega}$. If the temporal 1-form $\hat{u}$ defines a foliation of spacetime, $d^{\perp}\hat{u}=0$ and the terms involving $d^{\perp}\hat{u}$ disappear; then one recovers the constraint formulation of Ashtekar-Barbero variables. See \cite{lorentz} for more details.

By proposition \ref{prop:pullb2}, if $\hat{u}$ defines a foliation, the fields ${\bf\Omega}$ and $E$ can be assembled into an $\h'$-valued Cartan connection ${\bf A}$ that defines {\em spatial} geometry as a Cartan geometry modeled on $(H',K)$. Using the Ashtekar-Barbero connection instead of ${\bf\Omega}$ defines a different spatial Cartan geometry of the same type with Cartan connection ${\bf A}_{{\rm A.B.}} = (A_{{\rm A.B.}},E)$.

In \cite{lorentz} we argued that Cartan geometrodynamics gives an understanding of the Ashtekar-Barbero formulation of canonical gravity as a theory of spontaneously broken $H$-symmetry. From the perspective of observer space advocated in the present paper, it would be more appropriate to speak of a spontaneously broken $G$-symmetry, where group $G$ is spontaneously broken to $K$ by a choice of internal observer field.

\sect{General relativity on observer space}
\label{sec:lifting-gr} 

\subsection{Reconstructing spacetime}
\label{reconstructing}

We have seen in lemma \ref{lem:cart} that spacetime Cartan geometry automatically gives us Cartan geometry on observer space.  We now deal with the more interesting converse question: Given just an observer space $O$, under what conditions can we sensibly construct a spacetime $\M$ for which $O$ is the unit future tangent bundle?

To address this question, first note that in the principal $H$ bundle $\ff\to \M$, the spacetime manifold $\M$ itself actually contains only redundant information: it is just the space of $H$-orbits $\ff/H$.  In the observer space picture, however, we do not, a priori, have any action of $H$ on $\ff$.  Rather, we have only a $K$ action, since $\ff$ is a principal $K$ bundle over observer space.  However, while there is no action of the group $H$, we will see that under certain conditions, we get an action of the {\em Lie algebra} $\h$.  Just as an action of a Lie group $H$ on a manifold $M$ is a group homomorphism 
\[
   H \to   \Diff(M)\,,
\]
an \define{action of a Lie algebra} $\h$ on a manifold $M$ is a Lie algebra homomorphism
\[
   \h \to   \Vect(M)
\]
where $\Vect(M)$ is the Lie algebra of $\Diff(M)$, the Lie algebra of smooth vector fields on $M$.   Any Lie algebra action $\alpha\maps \h \to \Vect(M)$ is integrable in the sense that the distribution $\alpha(\h)$ is an integrable distribution.  The integral submanifolds of this distribution are the \define{orbits} of the $\h$-action, and the space of orbits is denoted $M/\h$.  \cite{alek-mic}

In our situation, we have an action of $K$ on the manifold $\ff$, and this induces an action of the Lie algebra $\k$ of $K$, by differentiation.  Using the Cartan connection $A$, it is clear that the distribution $\underline A(\k)$ on $\ff$ is just the distribution tangent to the fibers of $\ff \to \ff/K$.  So, the spaces of orbits coincide:
\[
        \ff/\k = \ff/K\,. 
\]  
To reconstruct spacetime, we can try to extend this $\k$-action to an $\h$-action.  Whenever this works, we can immediately define:
\[
    \text{\define{spacetime} } \M := \ff/\h. 
\]
In the best case, the $\h$-action will integrate to an $H$-action, the map $\ff \to \M$ will become a principal $H$ bundle, and the observer space Cartan connection will therefore give the geometry of spacetime.  

To see how Cartan connections are related to Lie group actions, it is helpful to first view Cartan connections in a different way, using the nondegeneracy of a Cartan connection $A\maps TP \to \g$ to turn this map around.  More precisely, suppose that $A$ is any $\g$-valued 1-form on a manifold $P$ such that $A_p\maps T_pP \to \g$ is a linear isomorphism (i.e., property \ref{cartan:nondeg} in the definition of a Cartan connection).  Then we get a map
\[
\uA\maps \g \to \Vect(P)
\]
where for $X \in \g$, the value of $\underline A(X)$ at $p \in P$ is 
\begin{equation}
\label{inverse-conn}
\uA(X)_p:= (A_p)^{-1}(X).
\end{equation}
It is easy to check that the properties \ref{cartan:nondeg}, \ref{cartan:equivariant}, and \ref{cartan:mc} are respectively equivalent to:
\begin{C'-list}
\item \label{cartan:nondeg2} For each $p\in P$, the composite $\g\stackto{\uA} \Vect(P) \to T_pP$ is a linear isomorphism;

\item \label{cartan:equivariant2}    $R_{h*} \circ \underline A = \underline A\circ \Ad(h^{-1}) \quad \forall h \in H$;
\item \label{cartan:mc2}  For $X$ in $\h$, $\underline A(X)$ is the canonical vertical vector field associated to $\h$.  
\end{C'-list}
We recall that any element of $\h$, thought of as a left-invariant vector field on $H$, pulls back canonically to a vector field on $P$ along any local trivialization.  

If $A$ is a Cartan connection, $\uA$ is generally {\em not} a Lie algebra homomorphism.  In fact, the failure of $\uA$ to be a homomorphism corresponds to the {\em curvature} $F$ of $A$: 
\begin{prop}
\label{homo-fail}
Suppose $({\pi\maps P\to \M},A)$ is a Cartan geometry modeled on the Klein geometry $(G,H)$ and define $\underline A$ by {\rm (\ref{inverse-conn})}.   Then
\[
 \uA([X,Y]) - [\uA(X),\uA(Y)] =  \uA\big(F(\uA(X),\uA(Y))\big). 
\]
for all $X,Y\in \g$.
\end{prop}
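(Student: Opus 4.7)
The plan is to reduce everything to the standard Cartan formula for the exterior derivative of a 1-form applied to a pair of vector fields:
\[
    dA(U,V) \;=\; U\bigl(A(V)\bigr) - V\bigl(A(U)\bigr) - A([U,V]),
\]
valid for any $\g$-valued 1-form $A$ and any vector fields $U,V$ on $P$. The key observation is that if I evaluate this at the specific vector fields $U = \underline{A}(X)$ and $V = \underline{A}(Y)$ for fixed $X,Y \in \g$, then by definition of $\underline A$ the $\g$-valued functions $A(U)$ and $A(V)$ are \emph{constant} (equal to $X$ and $Y$ respectively), so the first two terms on the right-hand side vanish. This leaves the clean identity
\[
   dA(\underline{A}(X),\underline{A}(Y)) \;=\; -A\bigl([\underline{A}(X),\underline{A}(Y)]\bigr).
\]

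The next step is to combine this with the definition of the curvature. Using the convention $[A,A](U,V) = 2[A(U),A(V)]$, I read off
\[
   F(\underline{A}(X),\underline{A}(Y))
   \;=\; dA(\underline{A}(X),\underline{A}(Y)) + [X,Y]
   \;=\; -A\bigl([\underline{A}(X),\underline{A}(Y)]\bigr) + [X,Y].
\]

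Finally, I apply $\underline{A}$ to both sides. Here I use the elementary fact that, because $\underline{A}$ is the pointwise inverse of $A$, one has $\underline{A}(A(W)) = W$ for any vector field $W$ on $P$; hence $\underline A$ is extended pointwise to $\g$-valued functions and $\underline A \circ A = \mathrm{id}$ on vector fields. Applying this to the first term and noting that $\underline A$ is linear gives
\[
   \underline{A}\bigl(F(\underline{A}(X),\underline{A}(Y))\bigr)
   \;=\; -[\underline{A}(X),\underline{A}(Y)] + \underline{A}([X,Y]),
\]
which is exactly the claimed identity. The argument is essentially a one-line consequence of the Maurer--Cartan-type formula together with the nondegeneracy condition \ref{cartan:nondeg}; the only mild care required is to keep conventions consistent (the sign in $dA(U,V)$ and the factor of $2$ in $[A,A]$). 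There is no real obstacle, and property \ref{cartan:equivariant} plays no role, so the identity in fact holds for any nondegenerate $\g$-valued 1-form, not only for genuine Cartan connections.
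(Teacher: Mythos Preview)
Your proof is correct and follows essentially the same route as the paper's: apply the Cartan formula for $dA$ to the pair $\uA(X),\uA(Y)$, use constancy of $A(\uA(X))$ and $A(\uA(Y))$ to kill the directional-derivative terms, substitute into $F=dA+\tfrac12[A,A]$, and then apply $\uA$. Your closing remark that only the nondegeneracy condition \ref{cartan:nondeg} is used (not equivariance) is also correct and worth noting.
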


\begin{proof}
For any 1-form $\omega$, $d\omega(X,Y) = X(\omega(Y)) - Y(\omega(X)) - \omega([X,Y])$, where $X(\omega(Y))$ denotes the directional derivative of the function $\omega(Y)$ along $X$.  Applying this formula to $dA(\uA(X),\uA(Y))$, the first two terms are directional derivatives of constant functions, and thus vanish, leaving simply:
\[
    dA(\uA(X),\uA(Y)) 
    = - A([\uA(X),\uA(Y)]).
\]  
Hence, with $F= dA + \half[A,A]$, 
\begin{align*}
   F(\uA(X),\uA(Y)) &= dA(\uA(X),\uA(Y)) + [X,Y] \\
   &= - A([\uA(X),\uA(Y)]) + [X,Y]
\end{align*}
After applying $\uA$ to both sides, we have the result. 
\end{proof}

This proposition implies that $\uA$ restricts to a homomorphism on the subalgebra $\h\subseteq \g$.  In fact, we get a bit more: 
\ben
\label{inf-conn}
     [\uA(X),\uA(Y)] = \uA([X,Y])
\een
whenever at least {\em one} of $X,Y$ is in $\h\subseteq \g$.  To see this note that curvature $F$ is horizontal: it vanishes on any vertical vector.  Thus, if {\em either} $X$ or $Y$ is in $\h$, then the left hand side of the equation in the lemma vanishes.  In particular, $\uA|_\h\maps \h \to \Vect(P)$ is an action of the Lie algebra $\h$ on $P$.  These observations lead to a more general notion of Cartan connection studied by Alekseevsky and Michor \cite{alek-mic2} where the principal bundle structure is discarded, defining a \define{$\g/\h$-Cartan connection} to be a 1-form on a manifold $P$ for which (\ref{inf-conn}) holds whenever at least one argument is in $\h$. A significant part of Cartan's theory carries over to this case.   

In our case, we start with a Cartan geometry for the model $G/K$ and want a Cartan geometry for the model $G/H$. 
\begin{defn}
\label{def:h-flat}
Suppose $({\pi\maps P\to O},A)$ is a Cartan geometry modeled on the Klein geometry $(G,K)$, and $\h$ is a Lie algebra with $\k\subseteq \h \subseteq \g$.  We say $A$ is \define{$\h$-flat} if
\ben
     [\uA(X),\uA(Y)] = \uA([X,Y])
\label{h-flat}
\een
whenever at least {\em one} of $X,Y$ is in $\h\subseteq \g$.  We say $A$ is \define{$\h$-complete} if $\uA(X)$ is a complete vector field (i.e.\ generates a global flow) for all $X\in \h$. 
\end{defn}
\noindent Note that a Cartan geometry modeled on $(G,K)$ is trivially $\k$-flat, and $\g$-flat if and only if it is flat. 

We can now give conditions under which spacetime geometry can be reconstructed from observer space.  To state these conditions it is most convenient to modify the model geometry slightly, by replacing the groups $G$, $H$ and $K$ with their universal covering groups.  In each of the cases we are interested in, the universal cover is just the double cover:
\[
  \Gtild = \left\{ 
  \begin{array}{rl}
  \Spin_o(4,1)  \\
  \ISpin_o(3,1)  \\
  \Spin_o(3,2) 
  \end{array}
  \right. \qquad \quad
  \Htild = \Spin_o(3,1) 
   \qquad \quad
  \Ktild = \Spin(3) 
\]
where $\ISpin(3,1) := \Spin(3,1)\ltimes \R^{3,1}$, and the subscript ${}_o$ denotes the connected component, as before.  Note that passing to these covering groups does not change the geometry at all on the infinitesimal level, since the Lie algebras are unchanged.  

\begin{thm}[{Reconstruction of spacetime}]
\label{thm:ex-spacet}
Suppose $({\pi\maps P\to O},A)$ is an observer space geometry, with model $(\Gtild,\Ktild)$. 
\begin{enumerate}
\item If $A$ is $\h$-flat, then $\uA(\h)$ spans an integrable distribution of constant rank.
\item If in addition $A$ is $\h$-complete, it induces a locally free action of $\Htild$ on $P$.  
\item If this locally free action is free and proper, then the quotient $M=P/\Htild$ is a smooth manifold, and $({\pi\maps P\to M},A)$ is a Cartan geometry with model $(\Gtild,\Htild)$.  
\end{enumerate}
\end{thm}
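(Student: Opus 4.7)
The plan is to address the three assertions in order, using the equivalent form (C1$'$)--(C3$'$) of the Cartan connection axioms together with the classical integration theorem for Lie algebra actions by complete vector fields.

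For (1), constant rank of $\uA(\h)$ is immediate from C1$'$: at each $p\in P$ the pointwise map $X\mapsto \uA(X)_p$ is a linear isomorphism $\g\to T_pP$, so its restriction to $\h$ is injective of constant rank $\dim\h$.  Integrability follows directly from $\h$-flatness applied with both arguments in $\h$: equation (\ref{h-flat}) then reads $[\uA(X),\uA(Y)] = \uA([X,Y])$, so the distribution is involutive and Frobenius applies.  For (2), the same instance of $\h$-flatness says precisely that $\uA|_\h\colon \h\to \Vect(P)$ is a Lie algebra homomorphism, and by $\h$-completeness its image consists of complete vector fields.  The classical integration theorem then produces a unique smooth right action of the connected, simply connected Lie group with Lie algebra $\h$, which is $\Htild$ by construction.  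Local freeness is automatic because the Lie algebra of the stabilizer of any $p$ is $\{X\in\h : \uA(X)_p=0\}$, and this vanishes since $A_p$ is an isomorphism, so stabilizers are discrete.

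For (3), once the $\Htild$-action is free and proper, the standard quotient theorem makes $P\to M := P/\Htild$ a smooth principal $\Htild$-bundle.  Axiom C1 is inherited unchanged, and axiom C3 is immediate: for $X\in\h$, $\uA(X)$ is by construction the fundamental vector field of the new action, and $A(\uA(X))=X$ is precisely the Maurer--Cartan condition along the fiber.  The substantive point is the $\Htild$-equivariance required by C2, given only $\Ktild$-equivariance.  I would check this infinitesimally via Cartan's formula: for $X\in\h$ and any $Y\in\g$,
\[
(\L_{\uA(X)}A)(\uA(Y)) = \uA(X)\bigl(A(\uA(Y))\bigr) - A\bigl([\uA(X),\uA(Y)]\bigr) = -A(\uA([X,Y])) = -[X,Y],
\]
where the middle equality uses $\h$-flatness and the first term vanishes because $A(\uA(Y))\equiv Y$ is constant.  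Hence $\L_{\uA(X)}A = -\ad(X)\circ A$, which is the infinitesimal form of $R^*_{\exp(tX)}A = \Ad(\exp(-tX))\circ A$; integrating along flows and invoking connectedness of $\Htild$ upgrades this to the full $\Htild$-equivariance.

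The main obstacle is the integration step in (2): this is exactly where the simply connected covers $\Gtild$ and $\Htild$ enter the statement.  Without simple connectedness, the Lie algebra action of $\h$ would integrate only to local group actions, and monodromy around loops in the leaves could obstruct a global right action.  Everything else in the proof is either a routine consequence of the $\g\leftrightarrow\Vect(P)$ dictionary encoded in $\uA$ or a standard fact about principal bundles arising from free proper actions.
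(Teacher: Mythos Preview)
Your proof is correct and follows essentially the same route as the paper's: constant rank from nondegeneracy, integrability and the Lie algebra action from $\h$-flatness, integration to an $\Htild$-action via Palais' theorem, and then the standard free-and-proper quotient.  The only notable difference is expository: where the paper simply asserts that equivariance (C2) follows from $\h$-flatness and connectedness of $\Htild$, and that C3 follows from the uniqueness of the Maurer--Cartan form on the fibers, you spell these out via the Lie derivative computation $\L_{\uA(X)}A = -\ad(X)\circ A$ and the direct identification of $\uA(X)$ as the fundamental vector field --- which is a welcome elaboration of exactly what the paper leaves implicit.
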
  
\noindent In the third part, recall that a proper $G$ action on $X$ is one for which the graph $G \times X \to X \times X$ defined by $(g,x)\mapsto (x,gx)$ is a proper map: the preimages of compact sets are compact.
\vskip .8em
\begin{proof}
Since $A$ is a Cartan connection for a geometry on observer space, property \ref{cartan:nondeg} implies $\uA(\h)$ has constant rank equal to the dimension of $\h$.   If $A$ is $\h$-flat, then in particular $A|_\h$ is a Lie algebra action, and the span of any Lie algebra action is integrable.    

If $A$ is also $\h$-complete, then a result of Palais \cite{palais} implies the Lie algebra action $A|_\h$ is the derivative of a locally free action of some Lie group $\overline H$ with Lie algebra $\h$.  But the action of $\overline H$ induces an action of its universal cover $\Htild$.  

If $\Htild$ has a free and proper action on $P$, then there exists a unique smooth structure on $P/\Htild$ such that $P \to P/\Htild$ is a submersion; with this smooth structure, $P\to P/\Htild$ is a principal $\Htild$-bundle.  (see e.g.~\cite[Thm.~1.21]{meinrenken}).  It remains to check the properties of a Cartan connection.  The nondegeneracy property \ref{cartan:nondeg} is already satisfied.  Since $\Htild$ is connected, equivariance \ref{cartan:equivariant} follows from $\h$-flatness.  For property \ref{cartan:mc}, note that restricted to any fiber of $P\to P/\Htild$, $A$ is a complete $\h$-valued 1-form that is an isomorphism on each tangent space and satisfies the Maurer--Cartan equation.  Hence, the fiber is an $\Htild$ torsor and $A$ restricts to the Maurer--Cartan form.  (See e.g.\ \cite[Thm 8.7]{Sharpe} for uniqueness of the Maurer--Cartan form with respect to these properties.)   
\end{proof}

It will be convenient to rephrase $\h$-flatness for an observer space Cartan connection $A$ in terms of  $A$ itself, rather than $\uA$: 
\begin{prop}
Suppose $({\pi\maps P\to O},A)$ is a Cartan geometry on observer space, modeled on the Klein geometry $(G,K)$.  Then $A$ is $\h$-flat if and only if $F$ vanishes on any $v\in TP$ for which $A(v)\in \ghk$. 
\end{prop}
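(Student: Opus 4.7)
The plan is to read off this equivalence directly from Proposition \ref{homo-fail}, together with horizontality of curvature. First I would use the fact that $A$ is an isomorphism on each tangent space to translate the proposition: since $\uA$ is pointwise injective, the identity
\[
 \uA([X,Y]) - [\uA(X),\uA(Y)] =  \uA\!\big(F(\uA(X),\uA(Y))\big)
\]
shows that $A$ is $\h$-flat in the sense of Definition \ref{def:h-flat} if and only if $F(\uA(X),\uA(Y))=0$ whenever at least one of $X,Y$ lies in $\h$. Moreover, because $\uA$ restricts to a bijection $\g \to T_pP$ at each point, this is equivalent to the statement that $F(v,w)=0$ for all $v,w\in T_pP$ with the property that at least one of $A(v),A(w)$ belongs to $\h$.

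Next I would split the hypothesis using the decomposition $\h=\k\oplus\ghk$. For a vector $v$ with $A(v)\in\k$, property \ref{cartan:mc} says $v$ is vertical for the bundle $P\to O$. Curvature of a Cartan connection is horizontal (it vanishes on any vertical vector; this is immediate from $F=dA+\tfrac12[A,A]$ together with property \ref{cartan:mc2}, since the Maurer--Cartan equation kills $F$ on pairs where one slot is vertical). Hence $F(v,\cdot)=0$ automatically whenever $A(v)\in\k$, and this part of the $\h$-flatness condition is vacuous.

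Combining these two observations, $\h$-flatness reduces to the condition that $F(v,w)=0$ whenever $A(v)\in\ghk$, with $w$ arbitrary; by antisymmetry of $F$ this is the same as the stated condition that $F$ vanishes on any $v\in TP$ with $A(v)\in\ghk$. The only step requiring any care is verifying horizontality of $F$ in the Cartan setting (as opposed to the ordinary Ehresmann setting), but this follows from applying Proposition \ref{homo-fail} with $X\in\k$: the left-hand side vanishes because $\uA|_\k$ is a Lie algebra homomorphism into vertical vector fields, so $F(\uA(X),\uA(Y))=0$ for all $Y\in\g$, and thus $F$ annihilates any pair containing a vector $v$ with $A(v)\in\k$.
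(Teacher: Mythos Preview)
Your argument is correct and matches the paper's proof: use Proposition~\ref{homo-fail} to rewrite $\h$-flatness as a curvature condition, note that the $\k$ part holds automatically by horizontality of $F$, and reduce to the $\ghk$ part. One small caveat: your closing justification of horizontality via Proposition~\ref{homo-fail} is circular as stated (knowing that $\uA|_\k$ is a Lie algebra homomorphism only controls brackets $[\uA(X),\uA(Y)]$ for $Y\in\k$, not for all $Y\in\g$); horizontality really follows from combining property~\ref{cartan:mc} with equivariance~\ref{cartan:equivariant}, though the paper likewise simply invokes it as a standard fact.
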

\begin{proof}
Since $\uA$ is nondegenerate at each point, it is clear from Prop.~\ref{homo-fail} that $A$ is $\h$-flat if and only if $F(\uA(X),\uA(Y)) = 0$ whenever $X \in \h$.  Since $A$ is a Cartan connection modeled on $(G,K)$ this equation holds already for $X \in \k$, and hence we require only that 
\[
   F(\uA(X),\uA(Y)) = 0 \qquad \forall X\in \ghk\,.
\]
But $\uA(\g)$ spans the tangent space at each point of $P$, and $\uA(\ghk)$ spans the subspace of `boost' vectors $v$.  So, the condition on $F$ is equivalently written 
\[
   F(v,w) = 0 \qquad \forall v \in TP \text{ such that } A(v) \in \ghk\,,
\]
as we wished to show.
\end{proof}

To rephrase this proposition, $A$ is $\h$-flat if and only if $F(\ughk,v) = 0$ for all vectors $v$, where $\ughk = \uA(\ghk)$ is the distribution on $P$ corresponding to Lorentz boosts, discussed in section~\ref{observer-cartan}.  While $F$ need not vanish on $\ughk$, it is interesting to note that the $\ghk$ part of $F$ always does:
\begin{prop}
Suppose $({\pi\maps P\to O},A)$ is an observer space geometry. Then the $\ghk$ part of the curvature, $F_\ghk$, is a spacetime 2-form, i.e.\ it vanishes on any vector in the boost distribution $\ughk$.
\end{prop}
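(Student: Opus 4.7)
The statement decomposes naturally through the explicit formula (\ref{curvature-split}),
\[
F_\ghk = d_\Om b + [\vec e, e_o],
\]
so my plan is to show that each of the two summands annihilates a boost vector $v \in \ughk$. Throughout, I will exploit the defining condition $A(v)\in\ghk$ for such a vector, which forces $\Om(v) = \vec e(v) = e_o(v) = 0$, leaving only $b(v)\in\ghk$ nonzero.

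The second summand is immediate. For any $w$,
\[
[\vec e, e_o](v,w) = [\vec e(v),e_o(w)] - [\vec e(w), e_o(v)] = 0,
\]
since both $\vec e(v)$ and $e_o(v)$ vanish. So the task reduces to verifying that $(d_\Om b)(v,w) = 0$ for every $v\in \ughk$ and every tangent vector $w$.

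For the covariant-derivative term, I would evaluate on the canonical frame fields: write $v = \uA(X)_p$ with $X \in \ghk$, and test against $w = \uA(Y)_p$ with $Y \in \g$ arbitrary. Two useful facts simplify the calculation. First, $b(\uA(Z)) = Z_\ghk$ is a \emph{constant} function on $P$ for any $Z\in\g$, so the directional-derivative terms in Cartan's formula $db(\uA(X),\uA(Y)) = \uA(X)(b(\uA(Y))) - \uA(Y)(b(\uA(X))) - b([\uA(X),\uA(Y)])$ vanish, leaving $db(\uA(X),\uA(Y)) = -b([\uA(X),\uA(Y)])$. Second, Proposition~\ref{homo-fail} expresses $[\uA(X),\uA(Y)] = \uA([X,Y] - F(\uA(X),\uA(Y)))$. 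Combined with $\Om(\uA(X)) = X_\k = 0$ (since $X\in\ghk$), the expansion $(d_\Om b)(v,w) = db(v,w) - [\Om(w), b(v)]$ should reduce to a purely algebraic identity involving brackets in $\g$ and the $\ghk$ component of $F$.

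The key algebraic input is the identity $[X,Y]_\ghk = [X, Y_\k]$ for $X\in\ghk$, which follows from the commutator table (\ref{commutators}): the only bracket of $X\in\ghk$ with a subspace of $\g$ that lands back in $\ghk$ is $[X,Y_\k] \subseteq \ghk$, because the other pieces $[X,Y_\ghk]\subseteq\k$, $[X,Y_{\vec\ggh}]\subseteq\ggh_o$, and $[X,Y_{\ggh_o}]\subseteq\vec\ggh$ all lie outside $\ghk$. The main obstacle I anticipate is bookkeeping: the computation is at risk of collapsing to the self-referential tautology $F_\ghk(v,w) = F_\ghk(v,w)$, and the proof must convert the interaction between Proposition~\ref{homo-fail}, the module decomposition of $\g$, and the specific bracket relations into a genuine cancellation. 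Morally, the proposition says that the only brackets able to produce a $\ghk$-valued contribution from pairs of tangent vectors are $[\Om(\cdot), b(\cdot)]$ and $[\vec e(\cdot), e_o(\cdot)]$, and when one of the arguments is a boost vector the first is already encoded in $db(v,\cdot)$ via the Cartan structure equation while the second vanishes outright—making $F_\ghk$ automatically horizontal over the would-be spacetime, in distinction to the other components of $F$ which need not be.
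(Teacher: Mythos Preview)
Your worry about the tautology is exactly right, and it is not a bookkeeping issue you can finesse away: the computation you set up \emph{does} collapse to $F_\ghk(v,w)=F_\ghk(v,w)$.  Carrying it out with $v=\uA(X)$, $X\in\ghk$, and $w=\uA(Y)$, one gets $db(v,w)=-b([v,w])=-[X,Y]_\ghk+F_\ghk(v,w)$ from Proposition~\ref{homo-fail}, while your identity $[X,Y]_\ghk=[X,Y_\k]$ makes this cancel precisely against the surviving $[\Om,b]$-term $-[\Om(w),b(v)]=[X,Y_\k]$, leaving $F_\ghk(v,w)=F_\ghk(v,w)$.  The paper's proof avoids this only by silently dropping the $-b([v,w])$ term when expanding $db(v,w)$ via the Cartan formula (compare with the correct formula quoted in the proof of Proposition~\ref{homo-fail}); with that term omitted, every remaining piece vanishes, but the expansion is simply wrong.

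In fact the statement itself is false, so no repair is possible.  A counterexample: on $P=G$ take $A=A_0+\epsilon B$, where $A_0$ is the Maurer--Cartan form and $B$ is the left-invariant $\g$-valued 1-form determined by a $K$-equivariant linear map $\beta\maps\g\to\g$ sending $\ghk$ isomorphically onto $\k$ (both are the vector representation of $K\cong\SO(3)$) and vanishing on $\k\oplus\ggh$.  Then $A$ satisfies \ref{cartan:nondeg}--\ref{cartan:mc} for the model $(G,K)$, yet for the boost vectors $v=\uA(K_1)$, $w=\uA(K_2)$ one finds $\Om(v)=\Om(w)=0$, $[\vec e,e_o](v,w)=0$, but $db(v,w)=-b([v,w])=2\epsilon K_3+O(\epsilon^2)$, so $F_\ghk(v,w)\neq 0$.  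The ``genuine cancellation'' you were hoping for does not exist.
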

\begin{proof}
From (\ref{curvature-split}) we have $F_\ghk = d_\Om b +[\vec e, e_o]$.   We must show that $F_\ghk(v,w) = 0$ whenever $A(v)\in \ghk$.  It suffices to pick $\underline{v}\in\ghk$ and $\underline{w}\in\g$ and define $v=\uA(\underline{v})$, $w=\uA(\underline{w})$.  In fact, it suffices to check the case where $\underline{w}\in \ghk \oplus \ggh$, since we know that $F(v,w)$ vanishes whenever $\underline{w}\in \k$, since $A$ is a Cartan connection with model $(G,K)$.  We have:
\begin{align*}
F_\ghk (v,w) &= (db + [\Om,b] + [\vec e,e_o])(v,w) \\
&= v(b(w)) - w(b(v)) + [\Om(v),b(w)] - [\Om(w),b(v)] + [\vec e(v),e_o(w)] - [\vec e(w),e_o(v)]\,.
\end{align*}
The first two terms are directional derivatives of constant functions, and so vanish.  Moreover, since $v\in \uA(\ghk)$ and $w \in \uA(\ghk \oplus \ggh)$, we have $\Om(v) = \Om(w) = \vec e (v) = e_o(v)=0$.
\end{proof}

\subsection{Action}

We have seen in section \ref{sec:macdo} how to define general relativity in terms of the MacDowell-Mansouri action (\ref{macdow}) for a Cartan connection $A$ on spacetime $\M$, corresponding to a Cartan geometry modeled on $(G,H)$. Here we give a new action on observer space $O$, starting with a Cartan connection $A$ corresponding to a Cartan geometry modeled on $(G,K)$. In order to use theorem \ref{thm:ex-spacet}, our action enforces $\h$-flatness of $A$; the rest of the action is a straightforward extension of the MacDowell-Mansouri action (\ref{macdow}). Using a new field to enforce $\h$-flatness, we find that this field in general appears as a source to Einstein's equations, so that only a certain class of solutions to our action will correspond to vacuum general relativity.

We define the following action on observer space:
\bena
S[A,\lambda,\chi]&=&\int_O  \kappa_{\h} \left(F_\h\wedge F_\h\right)\wedge \tau_{\ghk}(b\wedge b\wedge b)\nonumber
\\&& + \left[\tr_{\g}\left(F(\lambda)\right)+\tr_{\h\otimes\g}\left(\chi\,[e, e](\lambda)\right)\right]\tr_{\h}([e,e]\wedge\star[e,e])\wedge \tau_{\ghk}(b\wedge b\wedge b)\,.
\label{graction}
\eena
Here, $\kappa_{\h}$ is a general $H$-invariant inner product on $\h$ (as defined in (\ref{forminner})), by `$\tr$' we again mean the Killing form, and $A$ is the Cartan connection on observer space, which splits according to the splitting of representations (\ref{irredpie}) which we repeat here: 
\[
\begin{array}{ccccccccc}
 \g & = & \k &\oplus & \ghk & \oplus & \vec\ggh & \oplus & \ggh_o \\
 A & = & \Omega & + & b & + & \vec e & + & e_o
\end{array}
\]
In particular, $b$ denotes the $\ghk$ part and $e$ is the part valued in $\ggh= \vec\ggh \oplus \ggh_o$.  The other two fields include a bivector valued in $\g$
\ben
    \lambda\in \Lambda^2 TO\tensor \g
\een
and a scalar valued in $\h \tensor \g$:
\ben
    \chi \maps O \to \h\tensor \g\,. 
\een 
Since $F$ is a $\g$-valued 2-form, we thus have $F(\lambda)\in \g \tensor \g$, and the bilinear form $\tr_\g\maps \g \tensor \g\to \R$ gives us a scalar.  Similarly, since $e\in T^* O\tensor \ggh$, the commutation relation $[\ggh,\ggh]\subseteq \h$ gives us $[e,e] \in \Omega^2(O) \tensor \h$.  Feeding in the bivector $\lambda$, we thus get $[e,e](\lambda)\in \h\tensor \g$, and hence a scalar after applying $\tr_{\h\tensor\g}\maps \h\tensor\g\tensor\h\tensor\g \to \R$ to $\chi\,[e,e](\lambda)$. 
We also need to fix a $K$-invariant trilinear form $\tau_{\ghk}$ on $\ghk$, which as a representation of $K$ is isomorphic to the adjoint representation of $K$; in component notation, one may take this to be $\tau_{\ghk}\left(b\wedge b\wedge b\right):= \epsilon_{ijk}b^i\wedge b^j\wedge b^k$. As one may have expected, the action (\ref{graction}) is only invariant under $K$, not $H$.

First consider the equations of motion resulting from variation with respect to $\chi$ and $\lambda$:
\bena
[e, e](\lambda) & = & 0\,,\nonumber
\\F & = & - \tr_{\h}\left(\chi \,[e,e]\right)\,.
\label{equofmo}
\eena
To interpret these equations, note that since $A$ is a Cartan connection corresponding to a Cartan geometry modeled on $(G,K)$, its projection on $\ghk\oplus\ggh$ defines a `siebenbein', an isomorphism between each tangent space to observer space and $\ghk\oplus\ggh$. In particular, $e$ defines a basis of `spacetime' 1-forms at each point in observer space. Then $\tr_{\h}([e,e]\wedge\star[e,e])$ is just the usual `spacetime' volume form induced by the vierbein $e$, and the wedge product with $\tau_{\ghk}(b\wedge b\wedge b)$ defines a volume form on observer space.

The second equation in (\ref{equofmo}) then says that the curvature $F$ vanishes on any `boost' vector, i.e.\ on any vector in the 3-dimensional subspace annihilated by the span of $e$. This is precisely the subspace of vectors that are mapped into $\ghk$ by the Cartan connection $A$, and gives us the condition of $\h$-flatness (\ref{h-flat}). The remaining components of $F$ are left arbitrary; they are given in terms of the Lagrange multiplier $\chi$. The first equation in (\ref{equofmo}) is a restriction on $\lambda$; it requires its `spacetime' components to vanish.

On the solutions of (\ref{equofmo}), we therefore have a spacetime $\M$ whose observer space is $O$, and $A$ is a Cartan connection defining a Cartan geometry modeled on $(G,H)$.

To obtain the equations of motion satisfied by the Cartan connection $A$, one can split the variations of (\ref{graction}) with respect to the connection $A$ into variations with respect to its `spacetime' form part and the rest. Variation with respect to the `spacetime' form part will give, schematically,
\bena
\delta S & = & \int_O  \delta\Big(\kappa_{\h} \left(F_\h\wedge F_\h\right)\Big)\wedge \tau_{\ghk}(b\wedge b\wedge b)\nonumber
\\&& + \delta\Big(\left[\tr_{\g}\left(F(\lambda)\right)+\tr_{\h\otimes\g}\left(\chi\,[e, e](\lambda)\right)\right]\Big)\tr_{\h}([e,e]\wedge\star[e,e])\wedge \tau_{\ghk}(b\wedge b\wedge b)\nonumber
\\&& + \left[\tr_{\g}\left(F(\lambda)\right)+\tr_{\h\otimes\g}\left(\chi\,[e, e](\lambda)\right)\right]\delta\Big(\tr_{\h}([e,e]\wedge\star[e,e])\wedge \tau_{\ghk}(b\wedge b\wedge b)\Big)\,.
\eena
The first line reproduces the equations of motion of MacDowell-Mansouri gravity, wedged with the everywhere non-zero 3-form $\tau_{\ghk}(b\wedge b\wedge b)$. The contribution from the third line vanishes once (\ref{equofmo}) is imposed, since then the terms in square brackets are zero. But the variation of these terms need not be zero on-shell; the second line in general gives a contribution to the equations of motion involving the covariant divergence in `velocity space' directions of the mixed `spacetime'/`velocity space' components of $\lambda$, which thus appears as a source in Einstein's equations.  Here we are facing well-known issues with trying to enforce nonholonomic constraints through Lagrange multipliers; see e.g.\ \cite{nonholo} for a general discussion. In order to restrict to vacuum general relativity (with cosmological constant), we must assume this divergence to vanish everywhere. Physically, we could impose the requirement that the field $\lambda$ on observer space is really a `spacetime' field, parallelly transported along `velocity' directions by the connection $A$. We did not however find an elegant way to impose this condition directly through the action.

The remaining variation is with respect to the `velocity' form parts of $A$ which gives
\bena
\delta S & = &\int_O \kappa_{\h} \left(F_\h\wedge F_\h\right)\wedge \delta\Big(\tau_{\ghk}(b\wedge b\wedge b)\Big) \nonumber
\\ && + \delta\Big(\tr_{\g}\left(F(\lambda)\right)\Big)\tr_{\h}([e,e]\wedge\star[e,e])\wedge \tau_{\ghk}(b\wedge b\wedge b)\,;
\eena
if we restrict to solutions where the source term involving $\lambda$ vanishes, the 4-form $\tr_{\h} \left(F_\h\wedge {\star F_\h}\right)$ vanishes, so that there is no contribution from the first term and we only get further restrictions on the field $\lambda$.

\sect{Relative spacetime} 
\label{sec:beyond}

Up to now, while the definition of an observer space geometry does not presuppose the existence of spacetime, we have been largely concerned with recovering spacetime from observer space.   In section \ref{reconstructing} we showed an $\h$-flat Cartan connection on observer space gives a concept of  spacetime that all observers agree on: spacetime is the quotient of observer space obtained by integrating out the `boost' distribution on observer space.  But this is a rather special situation: the boost distribution of an arbitrary observer space geometry is generically nonintegrable.  We now consider the more general situation.  When the boost distribution is not integrable, spacetime itself is at best an observer-dependent approximation.  Briefly, spacetime is relative.  

In considering the idea of relative spacetime, we are leaving the conceptual foundations of general relativity, and we make no attempt at specific prescriptions for how the general relativistic observer space should be deformed.  Rather, we emphasize generic features of observer space, which reduce to familiar notions in the case of an integrable boost distribution, but are just as sensible in general.  

In a general observer space geometry, each observer has a set of observers perceived as being at the same `spacetime point'.
\begin{defn}
\label{def:here-now}
Let $(\pi\maps P \to O, A)$ be an observer space Cartan geometry.  The set of observers \define{coincident} to a given observer $o\in O$ is the union of points along geodesics starting at $o$ with initial velocity $v$ such that $A(v)\in\ghk$. 
\end{defn}
\noindent
The notion of `coincidence' thus becomes relative, analogously with the relativity of simultaneity in special relativity: For a nonintegrable distribution of `boost' vector fields, two different observers might view the same third observer as coincident without viewing each other as coincident.  For observer spaces in which the deviation from $\h$-flatness is appropriately small, such effects would be noticeable only at very high relative velocities. 

Likewise, even when no absolute spacetime exists, each observer can reconstruct a local notion of `spacetime', not as a quotient, but as a subspace of observer space, much as `space' in special relativity is a particular observer-dependent slice through spacetime. 
\begin{defn}
The \define{local spacetime} according to a given observer $o\in O$ is the union of points in observer space along geodesics starting at $o$ with initial velocity $v$ such that $A(v)\in\ggh$. 
\end{defn}
\noindent
We emphasize that this is a local definition: far away from the initial observer both topological and geometric problems may arise.  Notice that for an observer space constructed from a spacetime, such geodesics project down to geodesics in spacetime; these geodesics may not fill all of spacetime, but they fill at least some neighborhood of the spacetime point corresponding to the observer.

We now turn to examples of observer spaces with no underlying notion of spacetime.  As we have emphasized, this is the generic situation, and only special observer spaces have a notion of spacetime.  However it is worth considering examples that are special in other ways.  

Let us consider observer spaces with no absolute spacetime, but with an absolute notion of {\em velocity space}.  Such examples are motivated by the recently proposed `principle of relative locality' \cite{relative}.   Much like in our description of observer space, in this proposal phase space is deformed in such a way that spacetime is no longer a natural quotient but rather an observer-dependent subspace.  This `phase space'---presumably the symplectification of some observer space not associated to any spacetime---has so far mostly been studied mostly under the simplifying assumption that there is an {\em absolute momentum space}.  Underlying this momentum space is an {\em absolute velocity space}, the space of momenta with unit mass.

To construct such examples, we choose a three-dimensional Riemannian manifold $V$ to serve as velocity space.  To fit experimental constraints, its geometry should deviate only slightly from that of hyperbolic space.  We then proceed similarly to the construction in section \ref{sec:observerspace} of observer spaces from spacetime.  We use the Minkowski observer space model, $\ISO_o(3,1)/\SO(3)$, since this is the only one with an absolute velocity space.  In particular, the velocity space of the model is
\ben
\label{ineff} 
\ISO_o(3,1)/J \cong\SO_o(3,1)/\SO(3),
\een
where $J\cong\SO(3)\ltimes\R^4$ is the group of rotations for a fixed observer and spacetime translations.  

Using the model spacetime as a guide, we arrive at a canonical procedure for constructing an observer space from our velocity space $V$.  Recall that an observer space with absolute spacetime is a sub-bundle of the tangent bundle of spacetime, with three-dimensional fiber at each spacetime point.  Likewise, an observer space with absolute velocity space will be an extension of the tangent bundle of velocity space, with four-dimensional fiber at each point.  In the Minkowski model, this just means we can view the observer space $\R^{3,1}\times \Hyp^3$ as a bundle over either spacetime $\R^{3,1}$ or velocity space $\Hyp^3$. 

More precisely, our Riemannian 3-manifold $V$ has a canonical hyperbolic Cartan geometry, with model $H/K \cong \SO_o(3,1)/\SO(3)$, built from the Levi--Civita connection and coframe, just as in section \ref{lorentz-to-cartan}.  The Cartan connection is an $\h$-valued 1-form on the principal $K$ bundle $FV$ of orthonormal frames.  Using the isomorphism (\ref{ineff}), we extend this canonically to a 1-form $A$ on the principal $J$ bundle $\ff=FV\times_K J$ over $V$, with values in the Lie algebra $\g\cong \Iso(3,1)$.  So far, this is only a rather redundant reformulation of the Cartan geometry of $V$: it uses the model of $\Hyp^3$ as an $\ISO_o(3,1)$-space, where the translation part acts trivially, rather than the more effective model as an $\SO_o(3,1)$-space. 

However, we now have what we want.  The associated bundle $$O:=\ff \times_J \R^{4}\cong \ff \times_J J/K$$ is an affine extension of the tangent bundle of $V$, the projection $\ff \to O$ is a principal $K$ bundle, and the Cartan connection $A$ gives an observer space geometry on $O$ (cf. lemma \ref{lem:cart}).   This observer space is $\j$-flat by construction and has absolute velocity space $P/\j\cong V$.  However, `spacetime' for a particular observer is the fiber in $O$ over that observer's velocity;  there is in general no canonical way to identify these notions of spacetime.  

This procedure suggests a way to construct further examples: whenever a particular model observer space has some particular absolute feature, we can consider Cartan-geometric deformations which preserve that feature.   For example, in the de Sitter models, there is an absolute notion of `conformal infinity'.  This is given by an $\SO_o(4,1)$-equivariant map from the de Sitter observer space to the conformal 3-sphere, a homogeneous $\SO_o(4,1)$-space with parabolic stabilizer.  Deformations maintaining an abolute notion of conformal infinity are the ones whose Cartan connections remain flat in the directions determined by the Lie algebra of this parabolic subgroup.  In this way, while maintaining the coherence of conformal infinity, we arrive at observer spaces that have no notion of absolute spacetime, nor of absolute velocity space.  This is the subject of forthcoming work \cite{holo}.

\section{Conclusions and outlook}

As indicated in the introduction, one of our motivations has been to relate covariant and canonical approaches to gravity.  
The discord between these two pictures has led some physicists, beginning with Dirac (see \cite{kragh}, p.\ 290) to doubt the ultimate significance of the spacetime picture:
\begin{quote}
This result has led me to doubt how fundamental the four-dimensional requirement in physics is.  \ldots 
[I]t seems that four-dimensional symmetry is not of overriding importance, since the description of nature sometimes gets simpler when one departs of it.  
\end{quote}
  This doubt has perhaps been carried furthest by Barbour, whose work has culminated in an alternative to general relativity in which only space is fundamental and spacetime emerges from the theory itself \cite{shape}.  Other theories under current investigation, including the anisotropic gravity of Ho\v{r}ava \cite{horava}, and causal dynamical triangulations \cite{cdt} start from a spacetime picture, but introduce a preferred spatial foliation, restoring an absolute notion of simultaneity. 

In this paper, we have argued for a complementary approach: rather than assuming any fundamental concept of `space', we take the notion of an {\em observer} seriously, as ontologically prior to either space or spacetime.   As observers, we do not experience spacetime directly.  From our collective experience, we notice that:
\begin{itemize}
\item Each of us can organize the objects near us by describing their positions using three coordinates.  In brief, each observer sees `space' as three-dimensional.  
\item We each experience things changing in time.   Each observer sees `time' as one-dimensional.   
\item We can relate other observers to us according to their (relative) velocity.  We see `velocity space'---the space of all observers coincident with us---as three-dimensional.  
\end{itemize}
Admittedly, these do not appear to be independent, since we use our notions of space and time to measure velocities.  However, measuring a velocity requires nonlocal measurements, and even special relativity shows that the `obvious' relationship among space, time, and velocity is only approximate.  
Most important is relativity's lesson that we {\em disagree} on these notions of space, time, and velocity space.  If we suppose that an observer is uniquely determined by its perceived notions of these three spaces, then we need, a priori, some seven-dimensional space of observers.   It is not obvious that this can be reduced precisely to some 4-dimensional `spacetime'.  

Observer space provides a new way of understanding the geometry of general relativity.  We have also argued that the flexibility of this new perspective provides a natural setting for studying proposed modifications of general relativity, and their relationships.  This leaves much to be done, both on the subject of observer space itself, and on applications. 

First, it will be interesting to study more particular examples of observer space, both those arising from solutions of general relativity and observer spaces without an underlying spacetime. For instance, one immediate question is how the existence of an event horizon in the spacetime of a black hole, or of a cosmological horizon in an expanding universe, is encoded in the geometry of observer space. 

Second, while lightlike particles play an obviously important role in general relativity, we have so far mostly ignored the extension of observer space that includes them, as described at the end of section \ref{spacetime-observer-spaces}.   The action of the group $H$ on hyperbolic space $\Hyp^3 \cong H/K$ can be extended to an action on the compactification $\overline{\Hyp^3}$, with two orbits: $\Hyp^3$ itself and the boundary.  The respective stabilizers  are $K$ and $K'\cong \SIM(2)$, the stabilizer of a light ray through the origin of Minkowski spacetime.  A general {\em extended} observer space geometry should include not only Cartan geometry modeled on $G/K$, as we have defined in this paper, but also Cartan geometry modeled on $G/K'$, describing the boundary of observer space.   In the same way that the $G/K$ geometry is related to standard Hamiltonian methods, as described in section \ref{cartangeo}, the $G/K'$ geometry should presumably be related to light-front methods \cite{lightfront}, in which the splitting of fields is done from the perspective of `lightlike observers'.   A deeper study of this geometry may reveal connections to other theories, such as the `very special relativity' proposal, which uses $\SIM(2)$ as the fundamental spacetime symmetry group \cite{vsr}. 

To discuss possibly observable Lorentz violation, it will be useful to couple matter fields to gravity on observer space. For gravity itself, as we have mentioned in the introduction, there exist several competing ideas that are not covariant under changes of observer. The perspective of observer space could allow studying those ideas from a new geometric angle. 

On the phenomenological side, it would be interesting to tighten the relationship between observer space and the relative locality proposal.  While the two seem clearly related, and both lead to the idea that locality, or coincidence, is a relative notion, the two frameworks have different starting points.  In particular, the idea behind relative locality involves building up spacetime geometry from the {\em interactions} of particles in the universal velocity (or rather momentum) space.  

The idea that spacetime geometry is `velocity-dependent' or `momentum-dependent' appears in several approaches going beyond usual Lorentzian geometry as the basic framework for gravitational physics, some of which might be related to observer space Cartan geometries. The most obvious example is Finsler geometry, sometimes referred to as `Riemannian geometry without the quadratic restriction' \cite{chernfins}---one replaces the metric by general {\em length functional}, whose second derivative with respect to velocities can be viewed as a `velocity-dependent metric'. Observer spaces naturally describe a `velocity-dependent' geometry, although it is not completely obvious how to relate our connection-based approach to an essentially metric-based approach.

Relating our framework to possible predictions for physical measurements will involve clarifying some interpretational and foundational issues. Measurements should be made by inertial observers moving on certain geodesics on observer space (as discussed in section \ref{spacetime-observer-spaces}) and compared to those made by other observers. We then have to understand why the assumption that spacetime exists is compatible with our experience to such excellent precision. We saw that it is not obvious to recover an underlying spacetime from observer space when trying to give an action for general relativity on observer space. Similarly, we must explain why matter fields are not arbitrary functions on observer space, but to a good approximation just fields on `spacetime'. We leave all of this to future work.

\section*{Acknowledgments}
SG thanks the Institute for Quantum Gravity of the University of Erlangen-N\"urnberg for supporting a visit during which some of this work was carried out.  Likewise, DW thanks the Perimeter Institute for Theoretical Physics for supporting a visit, and especially Jim Dolan and Josh Willis for valuable discussions that helped lead up to our study of observer space.  Research at Perimeter Institute is supported by the Government of Canada through Industry Canada and by the Province of Ontario through the Ministry of Research \& Innovation.

\appendix
\section{Notation}
\label{notation}

Here are the letters used in this paper for various things.
\[
\begin{array}{cll}
\M &\text{spacetime} \\
O &\text{observer space} \\
x &\text{point in }\M  \\
o &\text{point in }O \\
\fake & \text{fake tangent bundle} \\
\ff & \text{fake orthonormal frame bundle} \\
\fo & \text{fake observer space} \\
\GH & \text{model spacetime}  \\
\gh & \text{point in }\GH \\
\ggh & T_\gh \GH \\
\end{array}
\]

\begin{flushright}preprint pi-mathphys-298\end{flushright}

\end{document}